\documentclass[a4paper, 12pt]{article} 

\pdfoutput=1 
\usepackage{amsmath, amsfonts, amscd, amssymb,  amsthm, mathrsfs, ascmac}
\usepackage{latexsym}
\usepackage{longtable, geometry}
\usepackage[english]{babel}
\usepackage[utf8]{inputenc}
\usepackage{ebgaramond}
\usepackage{graphicx}
\usepackage{bbm}
\usepackage{enumitem}
\usepackage{nicefrac}
\usepackage{bm}
\usepackage[svgnames,psnames]{xcolor}
\usepackage[colorlinks,citecolor=DarkGreen,urlcolor=blue,linkcolor=FireBrick,linktocpage,unicode]{hyperref}
 \usepackage{braket}
\usepackage{xparse}
\usepackage{tikz-cd}
\usetikzlibrary{positioning}
\usepackage{authblk}
\usepackage{url, hyperref}


\geometry{dvips, a4paper, margin=1.0in} 

\newtheorem{Thm}{Theorem}[section]
\newtheorem{Def}[Thm]{Definition}

\newtheorem{Lemm}[Thm]{Lemma}
\newtheorem{Prop}[Thm]{Proposition}

\theoremstyle{definition}

\newtheorem{Rem}[Thm]{Remark}

\newcommand{\be}{\begin{equation}}
\newcommand{\ee}{\end{equation}}

\newcommand{\ba}{\begin{align}}
\newcommand{\ea}{\end{align}}

\newcommand{\ben}{\begin{equation*}}
\newcommand{\een}{\end{equation*}}

\def\i<#1>{\langle #1 \rangle}
\def\l<#1>{\left\langle #1 \right\rangle}
\def\b<#1>{\big\langle #1 \big\rangle}

\newcommand{\la}{\langle}
\newcommand{\ra}{\rangle}
\newcommand{\bs}{\boldsymbol}

\newcommand{\Tr}{\mathrm{Tr}}
\newcommand{\BbbR}{\mathbb{R}}
\newcommand{\BbbN}{\mathbb{N}}
\newcommand{\BbbZ}{\mathbb{Z}}

\newcommand{\BbbC}{\mathbb{C}}

\newcommand{\vepsilon}{\varepsilon}
\newcommand{\vphi}{\varphi}
\newcommand{\no}{\nonumber \\}
\newcommand{\Ex}{\mathbb{E}}


\newcommand{\F}{\mathfrak{F}}


\numberwithin{equation}{section}

\makeatletter
  
  \@addtoreset{equation}{section}
\makeatother

\title{
\bfseries 

A finite temperature version of the Nagaoka--Thouless theorem in the $\mathrm{SU}(n)$ Hubbard model

}

\date{}

\author[1]{Tadahiro Miyao}

\affil[1]{Department of Mathematics,  Hokkaido University

Sapporo 060-0810,  Japan\\

e-mail: \texttt{miyao@math.sci.hokudai.ac.jp}
}

\begin{document}

\maketitle

\begin{abstract}
The Aizenman--Lieb theorem for the $\mathrm{SU}(2)$ Hubbard model expands upon the Nagaoka--Thouless theorem for the ground state to encompass finite temperatures. It can be succinctly stated that the magnetization $m(\beta, b)$ of the system in the presence of a field $b$ surpasses the pure paramagnetic value $m_0(\beta, b)=\tanh(\beta b)$.
In this manuscript, we present an extension of the Aizenman--Lieb theorem to the $\mathrm{SU}(n)$ Hubbard model. Our proof relies on a random-loop representation of the partition function, which becomes accessible when expressing the partition function in terms of path integrals.
\end{abstract}

\section{Introduction}

Strongly correlated fermionic many-body systems are of utmost importance in the field of modern condensed matter physics. Among these systems, a simplified model that captures the behavior of electrons in crystals, initially proposed by Kanamori, Gutzwiller, and Hubbard, has undergone extensive theoretical investigations \cite{Gutzwiller1963,Hubbard1963,Kanamori1963}, earning it the name ``Hubbard model." Despite its straightforward mathematical formulation, the Hubbard model has been substantiated through numerical studies to capture a wide range of phenomena, including metal-insulator transitions and magnetic ordering.
However, the rigorous analysis of the Hubbard model poses significant challenges, and the mathematical proofs of various numerical predictions, which encompass a wide range of phenomena, are often still lacking.
\medskip

 Nevertheless, in the realm of magnetic properties of the ground states of the Hubbard model, mathematical studies have made notable progress, resulting in several established theorems \cite{Lieb2004,Tasaki2020}. The pioneering work of Nagaoka and Thouless focused on a many-electron system characterized by an extremely strong Coulomb interaction and precisely one hole on the lattice. They proved that the ground state of such a system exhibits ferromagnetic behavior \cite{Nagaoka1965,Thouless_1965}. The Nagaoka--Thouless theorem represents the first rigorous result in the study of ferromagnetic ground states in the Hubbard model and has had a profound impact on subsequent research concerning magnetism in this model.
 \medskip
 
Aizenman and Lieb made a significant advancement by extending the Nagaoka--Thouless theorem to finite temperatures using a technique called random loop representations applied to the partition function \cite{Aizenman1990}.
Moreover, the author of the present paper goes even further by extending the scope of the Nagaoka--Thouless and Aizenman--Lieb theorems to systems that involve electron-phonon interactions and interactions with quantized radiation fields \cite{Miyao2017,Miyao2020-2}.  These extensions allow for the investigation of more complex interactions and provide valuable insights into the behavior of correlated electron systems coupled to other degrees of freedom.
 \medskip

 Recent advancements in experimental techniques for ultracold atoms have opened up new possibilities for studying the $\mathrm{SU}(n)$ Hubbard model, which is an extension of the conventional spin-$1/2$ Hubbard model with general $\mathrm{SU}(n)$ symmetry \cite{Cazalilla_2014,PhysRevX.6.021030,Pagano2014,Scazza2014,Zhang2014}. Theoretical investigations of the $\mathrm{SU}(n)$ model have primarily relied on numerical calculations, revealing the emergence of phenomena not observed in systems described by the usual $\mathrm{SU}(2)$ Hubbard model \cite{PhysRevB.77.144520, PhysRevLett.98.160405, Titvinidze2011,Zhao2007}. This extended model provides a rich framework for exploring novel quantum many-body phenomena and offers insights into the behavior of correlated systems with larger spin degrees of freedom.
On the contrary, just like the $\mathrm{SU}(2)$ model, the rigorous analysis of the $\mathrm{SU}(n)$ model poses significant challenges. Consequently, the question of how the rigorous findings of the $\mathrm{SU}(2)$ model can be extended to the $\mathrm{SU}(n)$ model is intriguing from both a physical and mathematical perspective. Regarding investigations in this realm, we refer to the following references: \cite{Katsura2013, LIU20191490, PhysRevB.96.075149, Tamura2021, Yoshida2021}.
In particular, Katsura and Tanaka, in their work \cite{Katsura2013}, meticulously examine the connectivity condition, which serves as the linchpin in the proof of the Nagaoka--Thouless theorem, in order to extend the Nagaoka and Thouless result to the $\mathrm{SU}(n)$ Hubbard model.
\medskip

This paper aims to extend the Aizenman--Lieb theorem, a finite temperature adaptation of the Nagaoka--Thouless theorem, to the $\mathrm{SU}(n)$ Hubbard model. The proof strategy entails the appropriate extension of the random-loop representation, employed in characterizing the $\mathrm{SU}(2)$ model, to analyze the partition function of the $\mathrm{SU}(n)$ Hubbard model.
Random-loop representations for the partition functions of quantum spin systems are widely recognized as potent analytical tools in the rigorous investigation of critical phenomena \cite{cmp/1104270709}. It is anticipated that the constructed random-loop representation in this paper will find further applications.
 \medskip

 The structure of this paper is as follows. In Section \ref{Sec2}, we introduce the $\mathrm{SU}(n)$ Hubbard model and explicitly present the main  theorems of this study. Furthermore, we emphasize that these fundamental theorems constitute extensions of the Aizenman--Lieb theorem. Subsequently, in Section \ref{Sec3}, we establish the necessary groundwork for demonstrating the main theorems by developing the Feynman--Kac--It{\^o} formulas for the heat semigroup generated by the $\mathrm{SU}(n)$ Hubbard model. Moving on to Section \ref{Sec4}, we initially construct a random loop representation for the partition function. Subsequently, utilizing this representation, we furnish proofs of the main theorems stated in Section \ref{Sec2}.

\subsection*{Acknowledgements}
T. M. was supported by JSPS KAKENHI Grant Numbers 20KK0304 and 23H01086. T. M. expresses deep gratitude for the generous hospitality provided by Stefan Teufel at the Department
of Mathematics, University of T\"{u}bingen, where T. M. authored a segment of this manuscript
during his stay.

\subsection*{Declarations}
\begin{itemize}
\item  Conflict of interest: The Authors have no conflicts of interest to declare
that are relevant to the content of this article.
\item  Data availability: Data sharing is not applicable to this article as no
datasets were generated or analysed during the current study
\end{itemize}

\section{Main results}\label{Sec2}
Let $d$ be a natural number greater than or equal to $2$.
Consider a $d$-dimensional hypercube lattice denoted as $\Lambda$, with each side having a length of $2\ell$: $\Lambda=(\mathbb{Z}\cap[-\ell, \ell))^d$.
The ${\rm SU}(n)$ Hubbard Hamiltonian on $\Lambda$ is defined by
\be
H_{\Lambda}^{\rm H}=\sum_{\sigma=1}^n\sum_{x, y\in \Lambda} t_{x, y} c_{x, \sigma}^*c_{y, \sigma}+\sum_{x\in \Lambda} \mu_x n_x
+\sum_{x, y\in \Lambda} U_{x, y} n_xn_y.
\ee
Here, $c_{x, \sigma}^*$ and $c_{x, \sigma}$ represent the creation and annihilation operators of a fermion located at site $x$ with the flavor $\sigma$, respectively, satisfying the customary anti-commutation relations:
\begin{align}
\{c_{x, \sigma}, c_{y, \tau}\}=0=\{c_{x, \sigma}^*, c_{y, \tau}^*\},\ \ \{c_{x, \sigma}, c_{y, \tau}^*\}=\delta_{x, y} \delta_{\sigma, \tau},
\end{align}
where $\delta_{x, y}$ represents the Kronecker delta.
$n_{x}$ is the number operator of fermions at the site $x$:
\be
n_{x}=\sum_{\sigma=1}^n c_{x, \sigma}^*c_{x, \sigma}.
\ee
$T=\{t_{x, y}: x, y\in \Lambda\}$ is a hopping matrix of fermions.  In this paper, we assume that $T$ describes the nearest neighbor hopping:\footnote{Let $x, y\in \Lambda$. We say that $x$ and $y$ are nearest neighbor if 
$\|x-y\|_{\infty}=1$, where $\|x\|_{\infty}=\max_{j=1}^d |x_j|$.} 
\be
t_{x, y}
=\begin{cases}
t & \mbox{if $x$ and $y$ are nearest neighbor}\\
0  & \mbox{otherwise.}
\end{cases}
\ee
We suppose that the Hamiltonian acts on the $N:=|\Lambda|-1$ particle space:
\be
\bigwedge^N \ell^2(\Lambda) \otimes \BbbC^n,
\ee
where $\bigwedge^N$ represents the $N$-fold anti-symmetric tensor product.
For the sake of convenience, we assume that the on-site Coulomb interaction is uniform:
\be
U=U_{x, x}\ \ \mbox{for all $x\in \Lambda$}.
\ee

We introduce the operators that will play a fundamental role in this paper by
\be
h_{\sigma}=N_{\sigma}-N_{\sigma+1},\ \ \ \sigma=1, \dots, n-1,
\ee
where $N_{\sigma}$ is the number operator of fermions with the flavour $\sigma$:
\be
N_{\sigma}=\sum_{x\in \Lambda} n_{x, \sigma},\ \ n_{x, \sigma}=c_{x, \sigma}^*c_{x, \sigma}.
\ee
If we define
\be
e_{\sigma, \sigma\rq{}}=\sum_{x\in \Lambda} c_{x, \sigma}^* c_{x, \sigma\rq{}},\ \ \sigma, \sigma\rq{}=1,\dots, n,\ \ \sigma\neq \sigma\rq{},
\ee
 the family $\{h_{\sigma}, e_{\sigma, \sigma\rq{}}\}$ of operators gives a representation of the $\mathfrak{su}(n)$ Lie algebra.
 Furthermore, the family of operators $\{h_{\sigma}\}$ serves as a representation of the Cartan subalgebra of $\mathfrak{su}(n)$.  Operators representing the Cartan subalgebra are widely acknowledged for their fundamental role in the representation theory of Lie algebras. For a more comprehensive understanding, please refer to \cite{Humphreys1972}.

In the case of $n=2$, $H_{\Lambda}$ is the conventional Hubbard Hamiltonian, and $h_1$ corresponds to the third component $S_{\rm tot}^{(3)}$ of the total spin operators.
Hence, in an extension of the terminology used for the $n=2$ case, we refer to the following Hamiltonian as a model for the interaction between an external magnetic field ${\bs b}=(b_1, \dots, b_{n-1})\in \BbbR^{n-1}$ and fermions:
\be
H^{\rm H}_{\Lambda}({\bs b})=H_{\Lambda}^{\rm H}-\sum_{\sigma=1}^{n-1} b_{\sigma} h_{\sigma}.
\ee

In this manuscript, we explore the system under the condition $U=\infty$. To describe the effective Hamiltonian for such a system, certain preparations are necessary.
Initially, we define an orthogonal projection denoted as $Q_{\Lambda}$ as follows: let $E_{n_x}(\cdot)$ be the spectral measure associated with  $n_x$, and assign $Q_{\Lambda, x}$ as $E_{n_x}(\{1\})$. We can then define
\be
Q_{\Lambda}=\prod_{x\in \Lambda}Q_{\Lambda, x}. \label{DefQ}
\ee
Subsequently, we proceed to define the subspace $\F_N$ within the $N$-particle space as follows:
\be
\F_{N}=Q_{\Lambda}\bigwedge^N \ell^2(\Lambda) \otimes \BbbC^n. \label{DefF_N}
\ee
The Hilbert space $\F_N$ effectively represents the set of state vectors characterizing a system wherein exactly one hole is present in $\Lambda$, while each site, except for the hole site, is occupied by a single fermion.
Under the condition of $U=\infty$, it necessitates an infinite amount of energy for two or more fermions to simultaneously occupy a single site. Consequently, the Hilbert space of states describing such a system is constrained to $\F_N$. Correspondingly, the effective Hamiltonian can be expressed as follows:
\be
H_{\Lambda}({\bs b})=Q_{\Lambda} H_{\Lambda}^{{\rm H}, U=0}({\bs b})Q_{\Lambda},
\ee
where $H_{\Lambda}^{{\rm H}, U=0}({\bs b})$ denotes the linear operator obtained by setting $U=0$ in the definition of $H_{\Lambda}^{{\rm H}}({\bs b})$.

The subsequent proposition provides a mathematical formulation of the aforementioned intuitive explanation:
\begin{Prop}\label{EffH}
In the limit of  $U\to \infty$, $H_{\Lambda}^{\rm H}({\bs b})$ converges to $H_{\Lambda}({\bs b})$ in the following sense:
\be
\lim_{U\to \infty} \Big\| (H_{\Lambda}^{\rm H}({\bs b})-z)^{-1}-(H_{\Lambda}({\bs b})-z)^{-1} Q_{\Lambda} \Big\|=0,\ \ z\in \BbbC\setminus \BbbR,
\ee
where $\|\cdot\|$ stands for the operator norm.
\end{Prop}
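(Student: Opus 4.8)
The plan is to exploit that the state space $\bigwedge^{N}\ell^{2}(\Lambda)\otimes\BbbC^{n}$ is finite dimensional, which turns the assertion into an elementary resolvent estimate, most transparently organised through the Schur--Feshbach decomposition relative to the orthogonal splitting $\bigwedge^{N}\ell^{2}(\Lambda)\otimes\BbbC^{n}=\F_{N}\oplus\F_{N}^{\perp}$. First I would write $H_{\Lambda}^{\rm H}(\bs b)=K_{\Lambda}+U\,\mathcal{V}_{\Lambda}$, where $K_{\Lambda}:=H_{\Lambda}^{{\rm H},U=0}(\bs b)$ is bounded and self-adjoint and $\mathcal{V}_{\Lambda}\ge 0$ is the $U$-independent on-site Coulomb operator. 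Since $\mathcal{V}_{\Lambda}$ annihilates $\F_{N}$ it commutes with $Q_{\Lambda}$, one has $\mathcal{V}_{\Lambda}Q_{\Lambda}=0$, and — this is the one place finite dimension is used — there is $\delta>0$ with $\mathcal{V}_{\Lambda}\ge\delta\,(1-Q_{\Lambda})$. With $P_{\Lambda}:=1-Q_{\Lambda}$, decomposing along $\F_{N}\oplus\F_{N}^{\perp}$ gives
\[
H_{\Lambda}^{\rm H}(\bs b)-z=\begin{pmatrix} a_{\Lambda}-z & b_{\Lambda}\\ b_{\Lambda}^{*} & d_{\Lambda,U}-z\end{pmatrix},\qquad a_{\Lambda}:=Q_{\Lambda}K_{\Lambda}Q_{\Lambda},\quad b_{\Lambda}:=Q_{\Lambda}K_{\Lambda}P_{\Lambda},\quad d_{\Lambda,U}:=P_{\Lambda}K_{\Lambda}P_{\Lambda}+U\,\mathcal{V}_{\Lambda},
\]
where $b_{\Lambda}$ is bounded uniformly in $U$ (in fact only the hopping term of $K_{\Lambda}$ contributes off-diagonally, the other terms being diagonal in the occupation-number basis).

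Next I would dispose of the lower-right block. Because $\mathcal{V}_{\Lambda}\ge\delta P_{\Lambda}$ on $\F_{N}^{\perp}$, the self-adjoint operator $d_{\Lambda,U}$ satisfies $d_{\Lambda,U}\ge U\delta-\|K_{\Lambda}\|$, so for fixed $z\in\BbbC\setminus\BbbR$ and all large $U$ the operator $d_{\Lambda,U}-z$ is invertible with $\|(d_{\Lambda,U}-z)^{-1}\|\le(U\delta-\|K_{\Lambda}\|-|z|)^{-1}\to0$ as $U\to\infty$. Then I would invoke the block-inversion formula: $H_{\Lambda}^{\rm H}(\bs b)-z$ is invertible precisely when the Schur complement $S_{\Lambda,U}:=(a_{\Lambda}-z)-b_{\Lambda}(d_{\Lambda,U}-z)^{-1}b_{\Lambda}^{*}$ is, and then
\[
(H_{\Lambda}^{\rm H}(\bs b)-z)^{-1}=\begin{pmatrix} S_{\Lambda,U}^{-1} & -S_{\Lambda,U}^{-1}b_{\Lambda}(d_{\Lambda,U}-z)^{-1}\\ -(d_{\Lambda,U}-z)^{-1}b_{\Lambda}^{*}S_{\Lambda,U}^{-1} & (d_{\Lambda,U}-z)^{-1}+(d_{\Lambda,U}-z)^{-1}b_{\Lambda}^{*}S_{\Lambda,U}^{-1}b_{\Lambda}(d_{\Lambda,U}-z)^{-1}\end{pmatrix}.
\]
From $\|b_{\Lambda}(d_{\Lambda,U}-z)^{-1}b_{\Lambda}^{*}\|\le\|b_{\Lambda}\|^{2}\,\|(d_{\Lambda,U}-z)^{-1}\|\to0$ one gets $S_{\Lambda,U}\to a_{\Lambda}-z$ in norm; as $a_{\Lambda}$ is self-adjoint and $z\notin\BbbR$, $a_{\Lambda}-z$ is invertible, so for large $U$ the $S_{\Lambda,U}$ are invertible, $\sup_{U}\|S_{\Lambda,U}^{-1}\|<\infty$, and $S_{\Lambda,U}^{-1}\to(a_{\Lambda}-z)^{-1}$. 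Feeding these bounds into the displayed block matrix, every entry except the top-left one carries a factor $(d_{\Lambda,U}-z)^{-1}$ and hence vanishes in operator norm, so $(H_{\Lambda}^{\rm H}(\bs b)-z)^{-1}\to\mathrm{diag}\big((a_{\Lambda}-z)^{-1},\,0\big)$ in norm.

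Finally I would identify this limit with $(H_{\Lambda}(\bs b)-z)^{-1}Q_{\Lambda}$. Since $H_{\Lambda}(\bs b)=Q_{\Lambda}K_{\Lambda}Q_{\Lambda}$ has block form $\mathrm{diag}(a_{\Lambda},0)$, we get $(H_{\Lambda}(\bs b)-z)^{-1}=\mathrm{diag}\big((a_{\Lambda}-z)^{-1},-z^{-1}\big)$, hence $(H_{\Lambda}(\bs b)-z)^{-1}Q_{\Lambda}=\mathrm{diag}\big((a_{\Lambda}-z)^{-1},0\big)$, which is exactly the operator-norm limit found above; this yields the claimed convergence for every $z\in\BbbC\setminus\BbbR$. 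The only step that requires any thought is the spectral-gap bound $\mathcal{V}_{\Lambda}\ge\delta P_{\Lambda}$ — the rigorous content of ``$U=\infty$ forbids multiple occupancy'' — which is automatic in finite dimensions; everything else is routine bookkeeping in the Feshbach formula. (An essentially equivalent route runs the same estimates through the second resolvent identity linking $H_{\Lambda}^{\rm H}(\bs b)$ and $H_{\Lambda}(\bs b)$, but the block form makes the cancellations most visible.)
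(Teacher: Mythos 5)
Your proof is correct, and it is genuinely self-contained, which the paper's treatment is not: the paper offers no proof of Proposition~\ref{EffH} at all, deferring to the proof of Theorem~2.5 of \cite{Miyao2017}. The argument there is organized around the monotone convergence theorem for quadratic forms: writing $H_{\Lambda}^{\rm H}({\bs b})=K_{\Lambda}+U\mathcal{V}_{\Lambda}$ with $\mathcal{V}_{\Lambda}\ge 0$, the forms increase in $U$, their limit is the form of $K_{\Lambda}$ restricted to $\ker\mathcal{V}_{\Lambda}=\F_N$, and the general Kato--Simon theorem then gives (strong, hence in finite dimensions norm) resolvent convergence to $(H_{\Lambda}({\bs b})-z)^{-1}Q_{\Lambda}$. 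That machinery is needed in \cite{Miyao2017} because the Holstein--Hubbard model carries infinitely many phonon degrees of freedom; in the present purely fermionic, finite-dimensional setting your Schur--Feshbach decomposition along $\F_N\oplus\F_N^{\perp}$ does the same job with elementary linear algebra and has the added benefit of an explicit rate, since every off-diagonal block of the resolvent is $O\big((U\delta-\|K_{\Lambda}\|-|z|)^{-1}\big)$. The two hypotheses you isolate --- $\mathcal{V}_{\Lambda}Q_{\Lambda}=0$ and $\mathcal{V}_{\Lambda}\ge\delta(1-Q_{\Lambda})$, i.e.\ $\ker\mathcal{V}_{\Lambda}=\F_N$ together with a spectral gap --- are exactly the content of ``$U=\infty$ forbids double occupancy,'' and they are what either route ultimately rests on; note only that they require reading the on-site term in its normal-ordered (pair-interaction) form, since the literal expression $\sum_{x}U n_x^2$ in the paper also contains the additive constant $UN$, which would have to be discarded (or absorbed into $\mu_x$) for the stated resolvent limit to be finite. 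Apart from making that reading explicit, I see no gap in your argument.
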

Proposition \ref{EffH} can be proved by similar arguments as in  the proof of Theorem 2.5 in \cite{Miyao2017}.

To present the first main theorem, we introduce the partition function for $H_{\Lambda}({\bs b})$ defined as follows:
\be
Z_{\Lambda}(\beta;  {\bs b})=\Tr_{\F_N}\Big[
e^{-\beta H_{\Lambda}({\bs b})}
\Big], \ \ \ \beta\ge 0.
\ee

\begin{Thm}\label{Main1}
Let $P_N$ denote the entire partitions of $N.$\footnote{Thus, each ${\bs n}=\{n_j\}_{j=1}^k\in P_N$ satisfies $n_j\in \BbbN$ and $\sum_{j=1}^k n_j=N$.}
For any ${\bs n}\in P_N$, there exists a positive number $D_{\beta}({\bs n})$ such that 
\be
Z_{\Lambda}(\beta;  {\bs b})=\sum_{{\bs n}\in P_N}D_{\beta}({\bs n}) \mathcal{G}_{\beta}({\bs n}; {\bs b}),\label{ExZ}
\ee
where, for each ${\bs n} =\{n_j\}_{j=1}^k\in P_N$, $\mathcal{G}_{\beta}({\bs n}; {\bs b})$ is defined by
\begin{align}
\mathcal{G}_{\beta}({\bs n}; {\bs b}) &= \prod_{j=1}^k G_{\beta}(n_j; {\bs b}),\label{DefG1}\\
G_{\beta}(m; {\bs b}) &= e^{\beta m b_1}+e^{-\beta m b_{n-1}}+\sum_{\sigma=2}^{n-1} e^{\beta m (-b_{\sigma-1}+b_{\sigma})}.
\end{align}

\end{Thm}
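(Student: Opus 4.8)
\noindent\emph{Outline of proof.} The plan is to combine the Feynman--Kac--It\^o representation of the heat semigroup $e^{-\beta H_\Lambda(\bs b)}$ (developed in Section \ref{Sec3}) with an elementary resummation over the flavor degrees of freedom. I would fix the occupation-number basis $\{\,|h;\bs\sigma\rangle\,\}$ of $\F_N$, in which $h\in\Lambda$ is the location of the unique hole and $\bs\sigma=(\sigma_y)_{y\in\Lambda\setminus\{h\}}$ records the flavor of the fermion at each of the remaining sites, and write
\be
Z_\Lambda(\beta;\bs b)=\sum_{h\in\Lambda}\sum_{\bs\sigma}\langle h;\bs\sigma|\,e^{-\beta H_\Lambda(\bs b)}\,|h;\bs\sigma\rangle .
\ee
In the path-integral expansion the hopping term generates a closed walk $\omega$ of the hole on $\Lambda$ over the imaginary-time interval $[0,\beta]$ --- each elementary jump transposing the hole with an adjacent fermion --- while the diagonal part contributes exponential weights along $\omega$.

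The structural point is that the field term $-\sum_{\sigma=1}^{n-1}b_\sigma h_\sigma=-\sum_{\sigma=1}^{n-1}b_\sigma(N_\sigma-N_{\sigma+1})$ depends on the state only through the flavor multiplicities $(M_1,\dots,M_n)$, and these are invariant under the rearrangement of fermions induced by the motion of the hole; the chemical-potential term $\sum_x\mu_x n_x$ likewise depends only on $h$. Consequently, along a fixed trajectory $\omega$ the diagonal weight splits as $w_\beta(\omega)\cdot\exp\!\big(\beta\sum_{\sigma=1}^{n-1}b_\sigma(M_\sigma-M_{\sigma+1})\big)$, where $w_\beta(\omega)$ --- the product of hopping amplitudes, the chemical-potential exponentials, and the fermionic signs --- is independent of $\bs\sigma$. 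A closed trajectory $\omega$ induces a permutation $\pi_\omega$ of the $N$ fermion sites, and the diagonal matrix element vanishes unless $\bs\sigma$ is constant on each cycle of $\pi_\omega$; writing $\bs n(\omega)=\{n_j\}_{j=1}^{k}\in P_N$ for the cycle-type of $\pi_\omega$, summing over the admissible $\bs\sigma$ is the same as assigning a flavor $\tau_j\in\{1,\dots,n\}$ freely to each cycle.

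I would then perform this flavor sum. A cycle of length $m$ carrying flavor $\tau$ shifts $\sum_\sigma b_\sigma(M_\sigma-M_{\sigma+1})$ by $m\gamma(\tau)$, with $\gamma(1)=b_1$, $\gamma(n)=-b_{n-1}$ and $\gamma(\tau)=-b_{\tau-1}+b_\tau$ for $2\le\tau\le n-1$. Hence the field factor factorizes over cycles, and summing the flavor of each cycle independently gives $\prod_{j=1}^{k}\big(\sum_{\tau=1}^{n}e^{\beta n_j\gamma(\tau)}\big)=\prod_{j=1}^{k}G_\beta(n_j;\bs b)=\mathcal{G}_\beta(\bs n(\omega);\bs b)$, matching \eqref{DefG1}. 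Grouping trajectories by their cycle-type now yields $Z_\Lambda(\beta;\bs b)=\sum_{\bs n\in P_N}D_\beta(\bs n)\,\mathcal{G}_\beta(\bs n;\bs b)$ with $D_\beta(\bs n):=\sum_{\omega:\,\bs n(\omega)=\bs n}w_\beta(\omega)$, which is \eqref{ExZ}.

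It remains to verify that $D_\beta(\bs n)>0$ for every $\bs n\in P_N$. This needs two things: (i) each weight $w_\beta(\omega)$ is nonnegative, which is precisely where the single-hole, $U=\infty$ structure enters and where the fermionic signs must be shown to conspire (the Nagaoka--Thouless mechanism, to be made precise via the sign analysis in the Feynman--Kac--It\^o formula of Section \ref{Sec3}); and (ii) for each $\bs n$ there is at least one closed hole walk of strictly positive weight realizing the cycle-type $\bs n$, which follows from the connectivity of $\Lambda$ --- since $d\ge2$, the nearest-neighbor graph on $\Lambda$ contains triangles, so transpositions, and hence arbitrary permutations, of the fermion sites can be generated (cf.\ the connectivity analysis of \cite{Katsura2013}). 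I expect ingredient (i) --- the precise sign bookkeeping in the path-integral expansion --- to be the main obstacle; once it is in place, the combinatorial resummation in the intermediate steps is routine.
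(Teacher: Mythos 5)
Your outline follows essentially the same route as the paper: a Feynman--Kac--It\^o representation of $Z_\Lambda(\beta;\bs b)$, reorganization of the closed (non-colliding, single-hole) trajectories into loops/cycles carrying a conserved flavor, and a free flavor sum over each cycle producing the factor $G_\beta(n_j;\bs b)$, with $D_\beta(\bs n)$ the total flavor-independent weight of trajectories of cycle type $\bs n$. The two ingredients you flag for positivity are exactly how the paper closes the argument --- the sign issue is disposed of by the fact (quoted from Aizenman--Lieb) that every dynamically allowed permutation is even, so the measure $\mu_\beta$ is positive, and realizability of each cycle type rests on the connectivity of the single-hole configuration graph.
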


\begin{Rem}
\rm 
For the case of $n=2$, Theorem \ref{Main1} can be expressed as follows: 
\be
Z_{\Lambda}(\beta; {\bs b}) =\sum_{{\bs n} \in P_N} D_{\beta}({\bs n}) \prod_{j=1}^k 2 \cosh(\beta b_1 n_j),\label{AL}
\ee
which reproduces the result established by Aizenman--Lieb \cite{Aizenman1990} for the standard Hubbard model.

\end{Rem}

To present the second result, we introduce the following symbols:
\be
B_{\sigma}=\begin{cases}
b_1 & \mbox{if $\sigma=1$}\\
b_{\sigma}-b_{\sigma-1} & \mbox{if $2\le \sigma \le n-2$}\\
-b_{n-1} & \mbox{if $\sigma=n-1$}.
\end{cases}
\ee

\begin{Thm}\label{Main2}
Let $n\ge 2$.
Fix $\sigma\in \{1, 2, \dots, n-1\}$,  arbitrarily.
Assume that $B_{\sigma}>B_{\tau}$  for all $\tau\neq\sigma$.
Then one obtains 
\be
\la h_{\sigma}\ra_{\beta}\ge  \frac{f_{\beta, \sigma}(\bs b)}{1+g_{\beta, \sigma}({\bs b})} N, \label{MainInq}
\ee
where $\la h_{\sigma}\ra_{\beta}$ stands for the thermal expectation\footnote{To be precise,
$\la h_{\sigma}\ra_{\beta}$ is defined by 
$\la h_{\sigma}\ra_{\beta}=\Tr_{\F_{N}}[h_{\sigma} e^{-\beta H_{\Lambda}({\bs b})}] \Big/ Z_{\Lambda}(\beta; {\bs b})$.
}
 of $h_{\sigma}$,  and
the functions $f_{\beta, \sigma}({\bs b})$ and $g_{\beta, \sigma} ({\bs b})$ are respectively given by  
\begin{align}
f_{\beta, \sigma}({\bs b}) &= \frac{1-e^{-\beta (B_{\sigma}-B_{\sigma+1})}}{1+e^{-\beta (B_{\sigma}-B_{\sigma+1})}},\label{Deff}\\
g_{\beta, \sigma} ({\bs b}) &= \sum_{\tau \neq \sigma, \sigma+1} e^{\beta (B_{\tau}-B_{\sigma})}. \label{Defg}
\end{align}
For $n=2$, we understand that $g_{\beta, \sigma}({\bs b})=0$.

\end{Thm}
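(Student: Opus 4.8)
The plan is to derive \eqref{MainInq} directly from the explicit decomposition \eqref{ExZ} of the partition function established in Theorem \ref{Main1}, by differentiating with respect to the field. The key observation is that $h_{\sigma}$ is essentially a derivative of the Hamiltonian: since $H_{\Lambda}({\bs b}) = H_{\Lambda}({\bs 0}) - \sum_{\tau=1}^{n-1} b_{\tau} h_{\tau}$, we have $\partial_{b_{\sigma}} H_{\Lambda}({\bs b}) = -h_{\sigma}$, and therefore, by Duhamel's formula (or simply because the trace behaves well under differentiation here),
\be
\la h_{\sigma}\ra_{\beta} = \frac{1}{\beta}\, \partial_{b_{\sigma}} \log Z_{\Lambda}(\beta; {\bs b}).
\ee
So the first step is to rewrite $\log Z_{\Lambda}$ using \eqref{ExZ}. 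Because $\mathcal{G}_{\beta}({\bs n};{\bs b})$ is a \emph{product} of the factors $G_{\beta}(n_j;{\bs b})$ and the coefficients $D_{\beta}({\bs n})$ are positive and field-independent, the logarithmic derivative has the structure of a convex combination: writing $w_{{\bs n}} = D_{\beta}({\bs n})\mathcal{G}_{\beta}({\bs n};{\bs b})/Z_{\Lambda}(\beta;{\bs b}) \ge 0$ with $\sum_{{\bs n}} w_{{\bs n}} = 1$, we get
\be
\la h_{\sigma}\ra_{\beta} = \sum_{{\bs n}\in P_N} w_{{\bs n}} \sum_{j=1}^{k({\bs n})} \frac{1}{\beta}\,\partial_{b_{\sigma}} \log G_{\beta}(n_j;{\bs b}).
\ee

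The second step is the single-factor estimate: I claim that for every $m \in \BbbN$ and every field ${\bs b}$ with $B_{\sigma} > B_{\tau}$ for all $\tau \neq \sigma$,
\be
\frac{1}{\beta}\,\partial_{b_{\sigma}} \log G_{\beta}(m;{\bs b}) \ge m\, \frac{f_{\beta,\sigma}({\bs b})}{1+g_{\beta,\sigma}({\bs b})}.
\ee
To see this, note that $G_{\beta}(m;{\bs b})$ is a sum of $n$ exponentials $e^{\beta m B_{\tau}}$, $\tau = 1,\dots,n-1$, together with the term $e^{-\beta m b_{n-1}}$; one checks that $B_{\sigma}$ appears in exactly two of these exponents with opposite signs — in $e^{\beta m B_{\sigma}} = e^{\beta m b_{\sigma - 1 \to \sigma}}$ type terms — more precisely $b_{\sigma}$ enters $G_{\beta}(m;{\bs b})$ through exactly the terms indexed by $\sigma$ and $\sigma+1$ (this is exactly the bookkeeping reason $f$ involves $B_{\sigma}-B_{\sigma+1}$ and $g$ omits $\sigma,\sigma+1$). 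Carrying out $\partial_{b_{\sigma}}$, factoring out the dominant exponential $e^{\beta m B_{\sigma}}$ in numerator and denominator, and discarding the (nonnegative, by $B_{\sigma} > B_{\tau}$) cross terms that only help the inequality, one is left precisely with $m(1 - e^{-\beta m (B_{\sigma}-B_{\sigma+1})})/(1 + e^{-\beta m(B_{\sigma}-B_{\sigma+1})} + \sum_{\tau\neq\sigma,\sigma+1} e^{\beta m(B_{\tau}-B_{\sigma})})$, and then one uses $m \ge 1$ to replace $\beta m$ by $\beta$ inside the exponentials in a way that only weakens the bound — here the monotonicity of $x \mapsto (1-e^{-x})/(1+e^{-x})$ and of $x\mapsto e^{-x}$ is what is needed, so the $m$-dependent single-factor bound dominates the claimed $m$-independent constant times $m$.

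The third and final step is to sum: plugging the single-factor estimate into the convex-combination formula gives
\be
\la h_{\sigma}\ra_{\beta} \ge \frac{f_{\beta,\sigma}({\bs b})}{1+g_{\beta,\sigma}({\bs b})} \sum_{{\bs n}\in P_N} w_{{\bs n}} \sum_{j=1}^{k({\bs n})} n_j = \frac{f_{\beta,\sigma}({\bs b})}{1+g_{\beta,\sigma}({\bs b})}\, N,
\ee
since $\sum_j n_j = N$ for every partition ${\bs n}\in P_N$ and $\sum_{{\bs n}} w_{{\bs n}} = 1$. The case $n = 2$ is immediate since then there is no $\tau \neq \sigma,\sigma+1$, so $g_{\beta,\sigma} = 0$ by convention, and the bound reduces to $\la h_1\ra_\beta \ge N\tanh(\beta b_1)$, recovering Aizenman--Lieb.

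I expect the main obstacle to be the second step, and specifically two subtleties within it: first, the precise combinatorial identification of which exponential terms in $G_{\beta}(m;{\bs b})$ depend on $b_{\sigma}$ (the edge cases $\sigma = 1$ and $\sigma = n-1$ must be handled so that the definition of $B_{\sigma}$ and the "$-b_{n-1}$" term mesh correctly), and second, the passage from the $m$-dependent bound to the clean $m$-independent constant — one must verify that discarding cross terms and using $m\ge 1$ genuinely only decreases the right-hand side, which rests on the sign condition $B_{\sigma} > B_{\tau}$ being used in the right place. The differentiation-under-the-trace step (first step) is routine given that everything is finite-dimensional, so no analytic care is needed there.
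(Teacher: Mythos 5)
Your proposal is correct and follows essentially the same route as the paper: differentiate $\log Z_{\Lambda}$ using the decomposition \eqref{ExZ}, reduce to the single-factor estimate $\partial_{b_\sigma}G_{\beta}(m;{\bs b})/G_{\beta}(m;{\bs b})\ge \beta m\, f_{\beta,\sigma}({\bs b})/(1+g_{\beta,\sigma}({\bs b}))$ via the ratio $(e^{\beta m B_\sigma}-e^{\beta m B_{\sigma+1}})/(e^{\beta m B_\sigma}+e^{\beta m B_{\sigma+1}}+\sum_{\tau\neq\sigma,\sigma+1}e^{\beta m B_\tau})$, and pass from $\beta m$ to $\beta$ by monotonicity using $m\ge 1$ and $B_\sigma>B_\tau$, exactly as in the paper's argument with $C_\sigma$, $S_\sigma$, $H_\sigma$. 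The subtleties you flag in your second step are handled in the paper precisely as you anticipate, so there is no gap.
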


\begin{Rem}
\begin{itemize}
\item
By using  \eqref{MainInq} and the fact  $g_{\beta, \sigma}({\bs b}) \le n-2$, we get
\be
\la h_{\sigma}\ra_{\beta} \ge \frac{f_{\beta, \sigma}(\bs b)}{n-1} N. \label{CoroMain}
\ee
Note that this inequality holds even for $n=2$.
\item
For the case of $n=2$, when we assign $B_1=-B_2=b>0$, we can re-establish the result of Aizenman--Lieb \cite{Aizenman1990} from \eqref{CoroMain} as follows:
\be
\la h_1\ra_{\beta}\ge  \tanh(\beta b)N.
\ee

\end{itemize}
\end{Rem}

\section{Functional integral representations for the semigroup generated by the Hamiltonian}\label{Sec3}
\subsection{Preliminaries}
\subsubsection{Case of a single fermion}\label{SigleF}
To prove the main theorems, we require a functional integral representation for $e^{-\beta H_{\Lambda}({\bs b})}$. In this section, we will provide an outline of how to construct the functional integral representation for the $\mathrm{SU}(N)$ Hubbard model, for the convenience of the readers.

States of a single fermion are represented by normalized vectors in the Hilbert space:
\be
\ell^2(\Lambda) \otimes \BbbC^n=\ell^2(\Omega),
\ee
where $\Omega=\Lambda\times \{1, \dots, n\}$. The inner product in $\ell^2(\Omega)$ is  given by 
\be
\la f|g\ra=\sum_{\sigma=1}^n \sum_{x\in \Lambda} f(x, \sigma)^* g(x, \sigma),\ \ f, g\in \ell^2(\Omega).
\ee
We define the free Hamiltonian $h_0$ for  a single fermion as follows:
\be
(h_0f)(x, \sigma)=\sum_{\sigma=1}^n \sum_{y\in \Lambda}t_{x, y}\Big(
f(x, \sigma)-f(y, \sigma)
\Big),\ \ f\in \ell^2(\Omega).
\ee

To represent  fundamental physical observables, we introduce the function $\delta_{(x, \sigma)}\ ((x, \sigma)\in \Omega)$ on $\Omega$ defined as follows: 
\be
\delta_{(x, \sigma)}(y, \tau)=\delta_{x, y} \delta_{\sigma, \tau},\ \ (y, \tau)\in \Omega.
\ee
Now, define the function $k_{\sigma}$  by
\be
k_{\sigma}=\sum_{x\in \Lambda} \{ \delta_{(x, \sigma)}-\delta_{(x, \sigma+1)} \},\ \ \sigma=1, \dots, n-1.
\label{Defk}
\ee
In this paper, we adopt the convention that the multiplication operator by a given function $v$ on $\Omega$ is denoted by the same symbol. Thus, for $v$ and $f$ in $\ell^2(\Omega)$, we have:
\be
(vf)(x, \sigma)=v(x, \sigma) f(x, \sigma).
\ee
Under this convention, 
for 
${\bs b}=(b_1, \dots, b_{n-1}) \in \BbbR^{n-1}$ and a function $\mu : (x, \sigma) \ni\Omega\to \mu_x\in \BbbR$, we define the self-adjoint operator $h_{\mu}({\bs b})$ on $\ell^2(\Omega)$ by
\be
h_{\mu}({\bs b})=h_0+\mu-\sum_{\sigma=1}^{n-1} b_{\sigma} k_{\sigma}.
\ee
Here, $\mu$ represents the on-site potential, and $\bs b$ represents the external field. It should be noted that $\mu_x$ does not depend on $\sigma$.

For the sake of simplicity in notation, we define $\BbbZ_+=\BbbZ\cap [0, \infty)$, which represents the set of non-negative integers.
Let $(Y_n)_{n\in \BbbZ+}$ be a discrete-time Markov chain with the state space $\Omega$. The Markov chain is characterized by the  transition probability:
 \begin{align}
 P(Y_n=X|Y_{n-1}=Y)=
  \delta_{\sigma, \tau} \frac{t_{x, y}}{d(y)}, \ \ \ d(x)=\sum_{y\in \Lambda} t_{x, y}
 \end{align}
 for $n\in \BbbN, X=(x, \sigma)\in \Omega$ and $Y=(y, \tau)\in \Omega$.
In the rest of this paper, we work with  the fixed probability space $(M, \mathcal{F}, P)$.
Let $(T_n)_{n\in \BbbZ_+}$ be independent exponentially  distributed  random variables of parameter $1$, independent of 
$(Y_n)_{n\in \BbbZ_+}$. 
We define the random variables $S_n$ and $J_n$ as follows: 
\begin{align}
S_n=\frac{T_n}{d(Y_{n-1})},\ \ \  J_n=S_1+\cdots +S_n.
\end{align}
Then, we define the process $X_t$ as:
\begin{align}
X_t=\sum_{n\in \BbbZ_+} \mathbbm{ 1}_{\{ J_n\le t < J_{n+1}\}} Y_n,
\end{align}
where $\mathbbm{1}_S$ represents the indicator function of the set $S$.
Under this setting,   $(X_t)_{t\ge 0}$ is  a right continuous process\footnote{To be presice, the topology on $\Omega$  is determined by the norm  $
\|(x, \sigma)\|=\max_{j=1, \dots, n}\{|x_j-y_j|\}+|\sigma-\tau|
$.}. Furthemore, 
$J_0:=0, J_1, J_2, \dots$ are the jump times of $(X_t)_{t\ge 0}$,  and 
$S_1, S_2, \dots$ are the holding times of $(X_t)_{t\ge 0}$.
Let $P_X(\cdot)=P(\cdot | X_0=X)$,  and let $(\mathcal{F}_t)_{t\ge 0}$ be the filtration defined by 
$\mathcal{F}_t=\sigma(X_s| s\le t)$. It can be shown that the process
$
(M, \mathcal{F}, (\mathcal{F}_{t})_{t\ge 0}, (P_X)_{X\in \Omega})
$
 is a strong Markov process, see,  e.g., \cite[Theorems 2.8.1 and 6.5.4]{Norris1997}.

 The following Feynman--Kac--It{\^o}  formula 
is well-known: 
\begin{align}
\big(
e^{-t h_{\mu}({\bs b})} f
\big)(X)=\Ex_X\Big[
e^{-\int_0^tv(X_t) dt}f(X_t)
\Big], \ \ f\in \ell^2(\Omega),\ \ X\in \Omega, \label{FKI1}
\end{align}
where
$\Ex_X[f]$ is the expected value of $f$ with respect to $P_X$, and 
 the function $v$ on $\Omega$  is defined by 
\be
v(X)=\mu(X)-\sum_{\sigma=1}^{n-1}b_{\sigma} k_{\sigma}(X).
\ee
A detailed proof of this formula can be found in \cite{Gneysu2015}. This representation \eqref{FKI1} serves as the first step in our analysis.

\subsubsection{Case of $N$ fermions}
In an $N$-fermion system, the non-interacting Hamiltonian is given by
\begin{align}
T=\underbrace{h_{\mu}({\bs b}) \otimes 1 \otimes \cdots \otimes 1}_N +1 \otimes h_{\mu}({\bs b})\otimes 1 \otimes \cdots \otimes 1 +\cdots+ 1 \otimes \cdots \otimes 1\otimes h_{\mu}({\bs b}), 
\label{NELHAMI}
\end{align}
where $h_{\mu}({\bs b})$ is the self-adjoint operator defined earlier. The operator $T$ acts on the Hilbert space $\bigotimes^N\ell^2(\Omega)$, which is the $N$-fold tensor product of $\ell^2(\Omega)$. In the subsequent analysis, we will freely use the identification:
\be
\bigotimes^N\ell^2(\Omega)= \ell^2(\Omega^N), \label{IdnTens}
\ee
 where $\Omega^N$ represents the $N$-fold Cartesian product of  $\Omega$. 
Note that this identification is implemented by the unitary operator $\iota: \bigotimes^N\ell^2(\Omega) \to \ell^2(\Omega^N)$ given by
\be
\iota( f_1\otimes \cdots \otimes f_N) = \big(f_1(X_1) \cdots f_N(X_N)\big)_{(X_1, \dots, X_N)\in \Omega^N}.
\ee

In the $N$-fermion system, we define the operators $\delta_{(x, \sigma)}^{(j)}$ on $\bigotimes^N_{j=1}\ell^2(\Omega)$ as multiplication operators given by
\begin{align}
\delta_{(x, \sigma)}^{(j)}=\underbrace{1\otimes \cdots \otimes 1 \otimes \overbrace{\delta_{(x, \sigma)} }^{j^{\mathrm{th}}}\otimes 1\otimes  \cdots \otimes 1}_N,\ \ j=1, \dots, N. \label{DefVod}
\end{align}
The Coulomb interaction term is 
 described as  the multiplication operator $V$ defined by 
\begin{align}
V&=V_{\mathrm{o}}+V_{\mathrm{d}},
\end{align}
where
\begin{align}
V_{\mathrm{d}}=U\sum_{x\in \Lambda}\sum_{\sigma, \tau=1}^n\sum_{i, j=1}^N \delta_{(x, \sigma)}^{(i)} \delta_{(x, \tau)}^{(j)},\ \
V_{\mathrm{o}}=\sum_{\sigma, \tau=1}^n\sum_{x, y : x\neq y}\sum_{i, j=1}^N U_{x, y} \delta_{(x, \sigma)}^{(i)}\delta_{(y, \tau)}^{(j)}. \label{VasMulti}
\end{align}
Here, $V_{\mathrm{d}}$ represents the diagonal part of the Coulomb interaction, where each fermion at position $x$ interacts with fermions at the same position $x$. On the other hand, $V_{\mathrm{o}}$ represents the off-diagonal part of the Coulomb interaction, where fermions at different positions $x$ and $y$ interact with each other. The parameter $U$ represents the strength of the on-site interaction, while $U_{x, y}$ represents the interaction potential between two fermions at positions $x$ and $y$.

Based on the previous setups, we define the Hamiltonian $L({\bs b})$ describing the interacting $N$-fermion system as the sum of the non-interacting Hamiltonian $T$ and the Coulomb interaction term $V$:
\be
L({\bs b}) =T+V.
\ee

To incorporate the Fermi--Dirac statistics for the fermions, we introduce the antisymmetrizer operator $A_N$ on $\ell^2(\Omega^N)$. 
The antisymmetrizer $A_N$ is defined as follows:
\begin{align}
(A_NF)({\bs X})=\sum_{\tau\in \mathfrak{S}_N} \frac{\mathrm{sgn}(\tau) }{N!} F(\tau^{-1}{\bs X})    \label{DefAN}
\end{align}
for $F\in \ell^2(\Omega^N)$ and   ${\bs X}=(X^{(1)}, \dots, X^{(N)}) \in \Omega^N$. Here, $\mathfrak{S}_N$ represents the permutation group on the set $\{1, \dots, N\}$, $\mathrm{sgn}(\tau)$ denotes the sign of the permutation $\tau$, and $\tau {\bs X}:=(X^{(\tau(1))}, \dots, X^{(\tau(N))})$ denotes the permuted configuration of ${\bs X}$ under the permutation $\tau$. The operator $A_N$ acts as an orthogonal projection from $\ell^2(\Omega^N)$ onto $\ell^2_{\mathrm{as}}(\Omega^N)$, which is the space of all antisymmetric functions on $\Omega^N$.
Using the antisymmetrizer operator, the Hamiltonian of interest, denoted as $H_{\Lambda}^{\rm H}({\bs b})$, can be expressed as
\be
H_{\Lambda}^{\rm H}({\bs b})=A_N L({\bs b})A_N.
\label{RepHami}
\ee

To construct a Feynman--Kac--It\^o formula for the semigroup generated by $H_{\Lambda}^{\rm H}({\bs b})$, we introduce the set $\Omega_{\neq}^N$, defined as
\begin{align}
\Omega_{\neq}^N=\Big\{{\bs X}\in \Omega^N\, \Big|\, X^{(i)} \neq X^{(j)} \ \mbox{for all $i, j\in \{1, \dots, N\}$ with $i\neq j$}\Big\}.
\end{align}
Using Eq.\eqref{IdnTens}, we have the following identification:
\be
\bigwedge^N \ell^2(\Lambda)\otimes \BbbC^n=\ell_{\rm as}^2(\Omega_{\neq}^N).
\ee
This identification is useful in the construction of the Feynman--Kac--It\^o formula.

In the context of an $N$-fermion system, given ${\bs m}=(m_1, \dots, m_N)\in (M)^N$, we define the random variable $X_s^{(j)}({\bs m})=X_s(m_j)$ for each $j=1, \dots, N$. Here, $M$ represents the sample space of the probability space introduced in Section \ref{SigleF}. This  random variable represents the spatial position and flavor of the $j$-th fermion at time $s$.
A right-continuous $\Omega^N$-valued function $({\bs X}_t({\bs m}))_{t\ge 0}=\big(X^{(1)}_t({\bs m}), \dots, X_t^{(N)}({\bs m})\big)$ is referred to as a {\it path} associated with ${\bs m}$. If we write $X_t^{(j)}({\bs m})=(x_t^{(j)}({\bs m}), \sigma_t^{(j)}({\bs m}))$, then $\sigma_t^{(j)}({\bs m})$ is called the {\it flavor component}  of $X_t^{(j)}({\bs m})$, and $x_t^{(j)}({\bs m})$ is called the {\it spatial component} of $X_t^{(j)}({\bs m})$, respectively. The collection of spatial components $(x_t^{(1)}({\bs m}), \dots, x_t^{(N)}({\bs m}))$ represents a trajectory of the $N$ fermions.

Define  the event  by
\begin{align}
D&=D_{{\rm O}}\cap D_{\mathrm{S}},
\end{align}
where
\begin{align}
D_{{\rm O}}&=
\Big\{{\bs m} \in (M)^N\, \Big|\, 
\mbox{
${\bs X}_s({\bs m}) \in \Omega_{\neq}^N$ for all $s\in [0, \infty)$
}
\Big\} ,\\
D_{{\rm S }}&=\Big\{{\bs m} \in (M)^N\, \Big|\, 
\mbox{
$\sigma^{(j)}_s({\bs m})=\sigma^{(j)}_0({\bs m})$ for all $j=\{1. \dots, N\}$ and $s\in [0, \infty)$
}
\Big\}.
\end{align}
Here,  in the definition of $D_{\mathrm{S}}$, $\sigma_s^{(j)}(\bs m)$ denotes the flavor part  of $X_s^{(j)}({\bs m})$.
 Note that, 
  for a given ${\bs m}\in D$, the path $({\bs X}_t({\bs m}))_{t\ge 0}$ has certain characteristics. First, the flavor components $\sigma_t^{(j)}({\bs m})$ are constant in time, meaning that the flavor of each fermion remains unchanged throughout the trajectory. Second, fermions of equal flavor never meet each other, indicating that fermions with the same flavor do not occupy the same spatial position at any given time. 

By using the Feynman--Kac--It\^o  formula for a single fermion \eqref{FKI1} and Trotter's product formula, one obtains the following:

\begin{Prop}\label{NFermi}
 For every   $
 {\bs X}=\big(X^{(1)},\dots, X^{(N)}\big)\in \Omega_{\neq}^N
$
and $F\in \ell^2_{\rm as}(\Omega^N)$,  we have
\begin{align}
\Big(e^{-t H_{\Lambda}^{\rm H}({\bs b})} F
\Big)({\bs X})
=\Ex_{\bs X}\Bigg[ \mathbbm{1}_{D}
 \exp\Bigg\{ 
 -\int_0^{t} W({\bs X}_s)ds
 \Bigg\}F({\bs X}_{t})
\Bigg],  \label{LFK}
\end{align}
 where $\Ex_{\bs X}[F]$ represents the expected value of $F$ associated with the probability measure $\otimes_{j=1}^N P_{X^{(j)}}$ on $(\Omega^N, M^N)$, and 
 \be
 W({\bs X})=V({\bs X})+\sum_{j=1}^N v(X^{(j)}),\ \ {\bs X}\in  \Omega_{\neq}^N.
 \ee
\end{Prop}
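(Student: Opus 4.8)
The plan is to start from the representation $H_{\Lambda}^{\rm H}({\bs b}) = A_N L({\bs b}) A_N$ with $L({\bs b}) = T + V$, and obtain a Feynman--Kac--It\^o formula for $e^{-tH_{\Lambda}^{\rm H}({\bs b})}$ in three movements. First, I would treat the \emph{non-antisymmetrized} semigroup $e^{-tL({\bs b})}$. Since $T$ is the sum of $N$ commuting single-particle operators $h_{\mu}({\bs b})$ acting on distinct tensor factors, the semigroup $e^{-tT}$ factorizes as $\bigotimes_{j=1}^N e^{-th_{\mu}({\bs b})}$, and applying the single-fermion formula \eqref{FKI1} in each factor (together with the identification \eqref{IdnTens}) yields
\begin{align}
\big(e^{-tT}F\big)({\bs X}) = \Ex_{\bs X}\Big[\exp\Big\{-\sum_{j=1}^N\int_0^t v(X_s^{(j)})\,ds\Big\} F({\bs X}_t)\Big],
\end{align}
where $\Ex_{\bs X}$ is the expectation with respect to the product measure $\otimes_{j=1}^N P_{X^{(j)}}$ and the $N$ coordinate processes are independent copies of the single-fermion process. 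Next, since $V$ is a multiplication operator (by \eqref{VasMulti}), the Trotter product formula $e^{-tL({\bs b})} = \lim_{k\to\infty}(e^{-(t/k)T}e^{-(t/k)V})^k$ lets me absorb $V$ into the exponent of the same path integral in the standard way, giving
\begin{align}
\big(e^{-tL({\bs b})}F\big)({\bs X}) = \Ex_{\bs X}\Big[\exp\Big\{-\int_0^t W({\bs X}_s)\,ds\Big\} F({\bs X}_t)\Big], \qquad W({\bs X}) = V({\bs X}) + \sum_{j=1}^N v(X^{(j)}).
\end{align}

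The second movement is to pass from $L({\bs b})$ to $H_{\Lambda}^{\rm H}({\bs b}) = A_N L({\bs b}) A_N$. For $F \in \ell^2_{\rm as}(\Omega^N)$ one has $A_N F = F$, so $e^{-tH_{\Lambda}^{\rm H}({\bs b})}F = A_N e^{-tL({\bs b})}F$ — here I would use that $A_N$ commutes with $T$ and with $V$ (the potential is symmetric under permutations), hence with $e^{-tL({\bs b})}$, and that $A_N^2 = A_N$; a short argument via Trotter makes this rigorous. Thus it remains to apply $A_N$ to the path-integral expression above. Spelling out $A_N$ via \eqref{DefAN} and using the permutation-invariance of $W$ together with the fact that $\otimes_j P_{X^{(\tau(j))}}$ is the pushforward of $\otimes_j P_{X^{(j)}}$ under relabeling the path coordinates, all $N!$ terms can be recombined: the sign $\mathrm{sgn}(\tau)$ is carried by $F(\tau^{-1}{\bs X}_t) = \mathrm{sgn}(\tau) F({\bs X}_t)$ since $F$ is antisymmetric, and one is left with $\Ex_{\bs X}[\exp\{-\int_0^t W({\bs X}_s)ds\} F({\bs X}_t)]$ \emph{with the starting configuration ${\bs X} \in \Omega_{\neq}^N$}. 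The restriction to $\Omega_{\neq}^N$ is legitimate because $\ell^2_{\rm as}(\Omega^N)$-functions vanish on the diagonal, so only initial data in $\Omega_{\neq}^N$ is relevant.

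The third and crucial movement is to insert the indicator $\mathbbm{1}_D$ and see that it costs nothing. The key observation is that for the process started at ${\bs X} \in \Omega_{\neq}^N$, the event $D^c = D_{\rm O}^c \cup D_{\rm S}^c$ is absorbed by the potential: since the flavor component of the single-fermion process never changes (the transition probability carries the factor $\delta_{\sigma,\tau}$, so $D_{\rm S}$ has full measure), $D_{\rm S}^c$ is null and $D_{\rm O}$ is really the event that no two fermions of the \emph{same} flavor ever collide. If two same-flavor fermions were ever to occupy the same site, the diagonal Coulomb term $V_{\rm d}$ in $W$ would contribute $+\infty$ to $\int_0^t W\,ds$ on a set of positive time-measure (by right-continuity), making $\exp\{-\int_0^t W\,ds\} = 0$ there; hence inserting $\mathbbm{1}_D$ does not change the expectation. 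This is the step I expect to be the main obstacle: one must argue carefully that the set of paths which touch the diagonal but contribute nonzero weight is genuinely null — this requires knowing that once two same-flavor coordinates coincide they stay coincident for a positive duration (a consequence of the jump structure of $(X_t)$, whose holding times are a.s.\ positive), so that the time integral of $V_{\rm d}$ along such a path diverges rather than merely being infinite on a Lebesgue-null set of times. Modulo this care about null sets of paths, assembling the three movements gives \eqref{LFK}. I would remark that this is the $\mathrm{SU}(n)$ analogue of the argument used for the $\mathrm{SU}(2)$ Hubbard model, the only new point being the flavor index and the product structure over flavors in $D_{\rm S}$.
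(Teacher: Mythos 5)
Your first two movements (tensorizing the single--fermion formula \eqref{FKI1} over the $N$ factors, Trotter to absorb the multiplication operator $V$, then antisymmetrization) are sound and are exactly the route the paper indicates. The problem is the third movement: your justification for inserting $\mathbbm{1}_{D}$ is wrong for this proposition. Proposition \ref{NFermi} concerns the \emph{finite}-$U$ Hamiltonian $H_{\Lambda}^{\rm H}({\bs b})$; the on-site term $V_{\rm d}$ in \eqref{VasMulti} is a bounded multiplication operator, so $\exp\{-\int_0^t W({\bs X}_s)\,ds\}$ is strictly positive on every path, including those on which two same-flavor fermions occupy the same site. The potential therefore cannot absorb the indicator of $D_{\rm O}$. (The mechanism you invoke --- colliding paths acquiring infinite action --- is used only later, in the proof of Theorem \ref{FKIHubbrdUInf}, to discard $D\setminus D_{\infty}(\beta)$ in the limit $U\to\infty$, and even there it concerns encounters of fermions of \emph{different} flavors at one site, which are permitted by $D_{\rm O}$.)

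The correct reason $\mathbbm{1}_{D}$ costs nothing is antisymmetry, not the potential. Your observation that $D_{\rm S}$ has full measure (the $\delta_{\sigma,\tau}$ in the transition kernel freezes the flavors) is right, so what must be shown is
$\Ex_{\bs X}\big[\mathbbm{1}_{D_{\rm O}^c}\,e^{-\int_0^t W({\bs X}_s)ds}F({\bs X}_t)\big]=0$ for antisymmetric $F$ and ${\bs X}\in\Omega_{\neq}^N$. This follows from the reflection (swap) argument of Aizenman--Lieb: on $D_{\rm O}^c$ let $\tau^*$ be the first time at which two coordinates coincide in $\Omega$ and let $(i,j)$ be the smallest such pair of indices; exchanging the trajectories of particles $i$ and $j$ on $[\tau^*,t]$ defines a measure-preserving involution of $D_{\rm O}^c$ (by the strong Markov property and the fact that the $N$ coordinate processes are independent copies of the same chain), it preserves $\exp\{-\int_0^t W\,ds\}$ because $W$ is permutation symmetric, and it sends $F({\bs X}_t)$ to $F((ij)\,{\bs X}_t)=-F({\bs X}_t)$. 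The contributions over $D_{\rm O}^c$ therefore cancel in pairs. Without this step your derivation of \eqref{LFK} does not close; with it, the rest of your argument goes through.
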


\subsubsection{The system of $U=\infty$}

Here, we present a Feynman--Kac--It{\^o} formula that elucidates the semigroup generated by the Hamiltonian $H_{\Lambda}({\bs b})$ in the context of an infinitely strong on-site Coulomb repulsion, i.e., $U=\infty$.

Let us define the set $\Omega_{\neq, \infty}^N$ as follows:
\begin{align}
\Omega_{\neq, \infty}^N=\Big\{
{\bs X}\in \Omega^N\, \Big|\, x^{(i)} \neq x^{(j)}\ \ \text{for all $i, j\in {1, \dots, N}$ with $i\neq j$}
\Big\}.
\end{align}
In the aforementioned definition, we employ the following notations: ${\bs X}=(X^{(1)}, \dots, X^{(N)})
$ with $X^{(j)}=(x^{(j)}, \sigma^{(j)})$.
Throughout the remainder of this paper, we shall make use of the subsequent natural correspondence:
\be
\F_N=\ell_{\rm as}^2(\Omega_{\neq, \infty}^N),
\ee
where $\F_N$ is precisely defined by \eqref{DefF_N}.

Given $\beta>0$, we define the event as 
$D_{\infty}(\beta)=D_{\mathrm{O}, \infty}(\beta) \cap  D$, where 
\begin{align}
D_{\mathrm{O},  \infty}(\beta)=\Big\{
{\bs m}\in (M)^N\, \Big|\, {\bs X}_s({\bs m})\in \Omega_{\neq, \infty}^N\, \mbox{ for all $s\in [0, \beta]$}
\Big\}.
\end{align}
It should be noted that for each ${\bs m} \in D_{\infty}(\beta)$, there are no encounters between fermions along the corresponding path $({\bs X}_t({\bs m}))_{t\in [0, \beta]}$.

Now we are ready to construct  a Feynman--Kac--It{\^o}   formula for  $e^{-\beta H_{\Lambda}({\bs b})}$.
\begin{Thm}\label{FKIHubbrdUInf}
For every   ${\bs X} \in \Omega_{\neq, \infty}^N$  and $F\in \ell^2_{\rm as}(\Omega_{\neq, \infty}^N)$,  we have
\begin{align}
\Big(e^{-\beta H_{\Lambda}({\bs b})} F
\Big)({\bs X})
=\Ex_{\bs X}\Bigg[ \mathbbm{1}_{D_{\infty}(\beta)}
 \exp\Bigg\{-\int_0^{\beta} W_{\infty}({\bs X}_s) ds
\Bigg\} F({\bs X}_t)
\Bigg], \label{InftyFKF}
\end{align}
where 
\be
W_{\infty}({\bs X})=V_{\rm o}({\bs X})+\sum_{j=1}^Nv(X^{(j)}),\ \ {\bs X}\in  \Omega_{\neq, \infty}^N.
\ee
Here, recall that $V_{\rm o}$ is given by \eqref{VasMulti}.
\end{Thm}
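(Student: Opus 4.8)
The plan is to derive \eqref{InftyFKF} from Proposition \ref{NFermi} by taking the limit $U\to\infty$ on both sides and exploiting the resolvent convergence of Proposition \ref{EffH}. First I would record that, by the functional calculus together with Proposition \ref{EffH}, the convergence of resolvents $(H_\Lambda^{\rm H}({\bs b})-z)^{-1}\to(H_\Lambda({\bs b})-z)^{-1}Q_\Lambda$ in operator norm upgrades to convergence of the semigroups: for every $\beta>0$,
\[
\lim_{U\to\infty}\bigl\| e^{-\beta H_\Lambda^{\rm H}({\bs b})}-e^{-\beta H_\Lambda({\bs b})}Q_\Lambda\bigr\|=0 ,
\]
since both Hamiltonians are bounded below uniformly (the only $U$-dependence that grows is the positive diagonal term $V_{\rm d}$), so one may apply a standard Laplace-transform / contour-integral argument, or simply invoke that norm resolvent convergence of semibounded operators implies convergence of the associated holomorphic semigroups on the right half-plane. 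Restricting the test function $F$ to lie in $\F_N=\ell^2_{\rm as}(\Omega^N_{\neq,\infty})$ and evaluating at ${\bs X}\in\Omega^N_{\neq,\infty}$, the left-hand side of \eqref{LFK} converges to the left-hand side of \eqref{InftyFKF}.

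Next I would analyze the right-hand side of \eqref{LFK} as $U\to\infty$ by dominated convergence on the probability space $(\Omega^N,M^N,\otimes_j P_{X^{(j)}})$. Write $W({\bs X}_s)=W_\infty({\bs X}_s)+V_{\rm d}({\bs X}_s)$, where $V_{\rm d}\ge 0$ and, crucially, $V_{\rm d}({\bs X}_s)=0$ precisely when ${\bs X}_s\in\Omega^N_{\neq,\infty}$ (no two fermions share a spatial site). Hence on the event $D_{{\rm O},\infty}(\beta)$ the extra factor $\exp\{-\int_0^\beta V_{\rm d}({\bs X}_s)\,ds\}$ equals $1$, while on its complement this factor tends to $0$ as $U\to\infty$ — indeed, if ${\bs X}_s\notin\Omega^N_{\neq,\infty}$ for $s$ in a set of positive Lebesgue measure then $\int_0^\beta V_{\rm d}({\bs X}_s)\,ds\ge cU$ for some $c>0$ depending on the path, so the integrand converges pointwise to $\mathbbm 1_{D_\infty(\beta)}\exp\{-\int_0^\beta W_\infty({\bs X}_s)\,ds\}F({\bs X}_\beta)$. (One should note the null-set subtlety: since the holding times are exponential and jumps are a.s.\ finitely many on $[0,\beta]$, the event that ${\bs X}_s$ visits a coincidence configuration at only finitely many instants has probability zero, so "positive measure in $s$" is the generic case and the dichotomy above holds a.s.) The integrand is dominated uniformly in $U$ by $\exp\{-\int_0^\beta W_\infty({\bs X}_s)\,ds\}\,|F({\bs X}_\beta)|$, which is integrable because $W_\infty$ is bounded below and $F\in\ell^2$; dominated convergence then yields the right-hand side of \eqref{InftyFKF}. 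Finally, I would remark that $D_\infty(\beta)=D_{{\rm O},\infty}(\beta)\cap D\subseteq D_{\rm O}$, so the indicator $\mathbbm 1_D$ appearing in \eqref{LFK} is automatically absorbed into $\mathbbm 1_{D_\infty(\beta)}$ in the limit.

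Equating the two limits gives \eqref{InftyFKF} for all ${\bs X}\in\Omega^N_{\neq,\infty}$ and $F\in\ell^2_{\rm as}(\Omega^N_{\neq,\infty})$, as claimed. The main obstacle I anticipate is the measure-theoretic bookkeeping in the pointwise-limit step: one must argue carefully that, outside a $\otimes_j P_{X^{(j)}}$-null set, a path that ever leaves $\Omega^N_{\neq,\infty}$ on $[0,\beta]$ in fact spends a positive amount of \emph{time} there (so that $\int_0^\beta V_{\rm d}\,ds\to\infty$), rather than merely touching the diagonal instantaneously at a jump time; this is where the right-continuity of $({\bs X}_t)$ and the structure of the jump process enter. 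An alternative, perhaps cleaner, route that sidesteps the semigroup-convergence input is to prove \eqref{InftyFKF} directly by Trotter's product formula applied to $H_\Lambda({\bs b})=A_N Q_\Lambda L^{U=0}({\bs b})Q_\Lambda A_N$, reproducing verbatim the derivation of Proposition \ref{NFermi} but with the projection $Q_\Lambda$ (equivalently, the indicator of $\Omega^N_{\neq,\infty}$) inserted at each Trotter slice; the product of these indicators over a fine partition converges to $\mathbbm 1_{D_{{\rm O},\infty}(\beta)}$, giving the result. I would present the resolvent-convergence argument as the main proof and mention the Trotter alternative in a remark.
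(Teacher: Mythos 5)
Your proposal is correct and follows essentially the same route as the paper: take $U\to\infty$ in Proposition \ref{NFermi}, using Proposition \ref{EffH} for convergence of the left-hand side and dominated convergence on the path space for the right-hand side, with the key dichotomy that $e^{-\int_0^\beta V_{\rm d}({\bs X}_s)\,ds}$ equals $1$ on $D_{\infty}(\beta)$ and tends to $0$ off it. Your extra care about paths spending positive time (not just an instant) at coincidence configurations is a point the paper glosses over, and your Trotter-based alternative is a reasonable aside, but the core argument coincides with the paper's proof.
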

\begin{proof}
By Proposition \ref{EffH}, we have 
\begin{align}
\lim_{U\to \infty}  e^{-\beta H_{\Lambda}^{\rm H} ({\bs b})} F 
= e^{-\beta H_{\Lambda}({\bs b})} F. \label{UInfty}
\end{align}
We denote by   $\mathbbm{1}_D G_U({\bs X}_{\bullet})$ the integrand in the right hand side of \eqref{LFK}. 
We split 
$\Ex_{{\bs X}}[\mathbbm{1}_DG_U({\bs X}_{\bullet})]$
 into two parts as  follows:
\begin{align}
\Ex_{{\bs X}}[\mathbbm{1}_DG_U({\bs X}_{\bullet})]=\Ex_{{\bs X}}[\mathbbm{1}_{D_{\infty}(\beta) }G_{U=0}({\bs X}_{\bullet})]
+\Ex_{{\bs X}}[\mathbbm{1}_{D\setminus D_{\infty}(\beta)} G_{U=0}({\bs X}_{\bullet})].
\end{align}
Because $
\lim_{U\to \infty} G_U({\bs X}_{\bullet}({\bs m}))=0
$ for all ${\bs m} \in D\setminus D_{\infty}(\beta)$, we have
\begin{align}
\lim_{U\to \infty}\Ex_{{\bs X}}[\mathbbm{1}_DG_U({\bs X}_{\bullet})]=\Ex_{{\bs X}}[\mathbbm{1}_{D_{\infty}(\beta) }G_{U=0}({\bs X}_{\bullet})]\label{InftyUMain}
\end{align}
 by the dominated convergence theorem.
Combining  (\ref{UInfty}) and (\ref{InftyUMain}), we obtain the desired statement as presented in Theorem \ref{FKIHubbrdUInf}.
\end{proof}

\subsubsection{A Feynman--Kac--It{\^o}  formula for the partition function}
Given $\beta>0$, let us define the event $D_{\mathrm{P}, \infty}(\beta)$ as follows:
\begin{align}
D_{\mathrm{P}, \infty}(\beta)=\big\{{\bs m}\in (M)^N\, \big|\, \mbox{$\exists\tau \in \mathfrak{S}_{N}({\bs X}_0({\bs m}))$ such that ${\bs X}_{\beta}({\bs m})=\tau {\bs X}_0({\bs m})$} \big\}, \label{DefDP}
\end{align}
Here, $\mathfrak{S}_N({\bs X})$ is a subset of $\mathfrak{S}_N$, and its precise definition will be provided in Definition \ref{DynP} below due to its somewhat intricate nature.
 We then define the event as:
\begin{align}
L_{\beta}=D_{\infty}(\beta) \cap D_{\mathrm{P}, \infty}(\beta).
\end{align}

The purpose here is to prove the following theorem:
\begin{Thm}\label{FKIUINFINITY}
For every $\beta >0$, there exists  a measeure $\mu_{\beta}$ on $L_{\beta}$ such that the following equation holds:
\begin{align}
Z_{\Lambda}(\beta; {\bs b}) 
=
\int_{L_{\beta}}d \mu_{\beta}
\prod_{j=1}^{N}\prod_{\sigma=1}^{n-1}\exp\Bigg\{\int_0^{\beta} b_{\sigma} k_{\sigma}(X^{(j)}_s)ds\Bigg\},
\label{FKITrUInf}
\end{align}
where $k_{\sigma}$ is given by \eqref{Defk}.
\end{Thm}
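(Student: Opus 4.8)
The plan is to compute the trace $Z_{\Lambda}(\beta;{\bs b})=\Tr_{\F_N}\big[e^{-\beta H_{\Lambda}({\bs b})}\big]$ in the position--flavour basis of $\F_N=\ell^2_{\rm as}(\Omega_{\neq,\infty}^N)$ and then to feed each diagonal matrix element into the Feynman--Kac--It\^o formula of Theorem \ref{FKIHubbrdUInf}. For ${\bs X}\in\Omega_{\neq,\infty}^N$ the antisymmetrised deltas $A_N\delta_{{\bs X}}$ are supported on the $\mathfrak{S}_N$-orbit of ${\bs X}$ and satisfy $\|A_N\delta_{{\bs X}}\|^2=1/N!$ because the components of ${\bs X}$ are pairwise distinct; choosing one representative per orbit and rescaling by $\sqrt{N!}$ therefore gives an orthonormal basis of $\F_N$. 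Using the identity $\langle A_N\delta_{{\bs X}},g\rangle=g({\bs X})$ for antisymmetric $g$, together with the fact that $H_{\Lambda}({\bs b})$ and $A_N$ commute with the $\mathfrak{S}_N$-action (so that $(e^{-\beta H_{\Lambda}({\bs b})}A_N\delta_{{\bs X}})({\bs X})$ is constant along each orbit), the factor $N!$ from the normalisation cancels the orbit count and one is left with
\begin{align}
Z_{\Lambda}(\beta;{\bs b})=\sum_{{\bs X}\in\Omega_{\neq,\infty}^N}\big(e^{-\beta H_{\Lambda}({\bs b})}A_N\delta_{{\bs X}}\big)({\bs X}).
\end{align}

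Next I would invoke Theorem \ref{FKIHubbrdUInf} with $F=A_N\delta_{{\bs X}}\in\ell^2_{\rm as}(\Omega_{\neq,\infty}^N)$. By the definition \eqref{DefAN}, for each realisation one has $(A_N\delta_{{\bs X}})({\bs X}_\beta)=\frac1{N!}\sum_{\tau\in\mathfrak{S}_N}\mathrm{sgn}(\tau)\,\mathbbm{1}\{{\bs X}_\beta=\tau{\bs X}_0\}$, and since the components of ${\bs X}_0={\bs X}$ are distinct at most one $\tau$ contributes. On the event $D_{\infty}(\beta)\subseteq D_{\mathrm{S}}$ the flavour components of the path are frozen, so any contributing $\tau$ must preserve flavours, i.e.\ belongs to the subgroup $\mathfrak{S}_N({\bs X}_0)$ of Definition \ref{DynP}; hence the nonvanishing contributions are precisely those carried by $L_{\beta}=D_{\infty}(\beta)\cap D_{\mathrm{P},\infty}(\beta)$ (recall \eqref{DefDP}), on which $(A_N\delta_{{\bs X}})({\bs X}_\beta)=\mathrm{sgn}(\tau_{{\bs m}})/N!$ for the unique flavour-preserving $\tau_{{\bs m}}$ with ${\bs X}_\beta({\bs m})=\tau_{{\bs m}}{\bs X}_0({\bs m})$.

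It then remains to separate the magnetic weight. Since $W_{\infty}({\bs X})=V_{\rm o}({\bs X})+\sum_{j=1}^{N}v(X^{(j)})$ with $v=\mu-\sum_{\sigma=1}^{n-1}b_{\sigma}k_{\sigma}$, one factors
\begin{align}
e^{-\int_0^{\beta}W_{\infty}({\bs X}_s)ds}=e^{-\int_0^{\beta}V_{\rm o}({\bs X}_s)ds}\;e^{-\int_0^{\beta}\sum_{j}\mu(X^{(j)}_s)ds}\;\prod_{j=1}^{N}\prod_{\sigma=1}^{n-1}e^{\int_0^{\beta}b_{\sigma}k_{\sigma}(X^{(j)}_s)ds},
\end{align}
the last product being exactly the integrand in \eqref{FKITrUInf}. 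Collecting the pieces, I would \emph{define} $\mu_{\beta}$ as the measure on $L_{\beta}$ obtained by restricting the finite reference measure $\sum_{{\bs X}\in\Omega_{\neq,\infty}^N}\bigotimes_{j=1}^{N}P_{X^{(j)}}$ to $L_{\beta}$ and reweighting by the density $\frac{1}{N!}\,\mathrm{sgn}(\tau_{{\bs m}})\,e^{-\int_0^{\beta}V_{\rm o}({\bs X}_s)ds}\,e^{-\int_0^{\beta}\sum_{j}\mu(X^{(j)}_s)ds}$; then \eqref{FKITrUInf} follows by running the three steps above in reverse.

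The step I expect to be the main obstacle is controlling the sign $\mathrm{sgn}(\tau_{{\bs m}})$ — the only factor in the density above whose positivity is not obvious. Here the single-hole hypothesis $N=|\Lambda|-1$ is decisive: along any path in $D_{\infty}(\beta)$ exactly one site of $\Lambda$ is unoccupied at each time, so $\tau_{{\bs m}}$ is the permutation induced by a closed ``hole walk'', and one checks from the cycle structure of such walks that $\mathrm{sgn}(\tau_{{\bs m}})=1$, so that $\mu_{\beta}$ is a genuine (nonnegative) measure. This is the finite-temperature counterpart of the combinatorial sign analysis underlying the Nagaoka--Thouless theorem, and carrying it out carefully — while also keeping the two independent factors of $1/N!$ (one from the orthonormalisation $\sqrt{N!}\,A_N\delta_{{\bs X}}$, one from \eqref{DefAN}) and the orbit-count $N!$ consistent — is the technical heart of the proof. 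For the bare identity \eqref{FKITrUInf} alone one could of course retain $\mu_{\beta}$ as a finite signed measure, deferring positivity to the loop analysis of Section \ref{Sec4}.
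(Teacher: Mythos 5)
Your proposal follows essentially the same route as the paper: expand the trace in the antisymmetrized delta basis, insert the Feynman--Kac--It\^o formula of Theorem \ref{FKIHubbrdUInf}, reduce the permutation sum to the dynamically allowed ones, invoke $\mathrm{sgn}(\tau)=1$ for those, and define $\mu_{\beta}$ by reweighting with the ${\bs b}={\bs 0}$ density --- this is exactly the paper's Step 2 and Eq.~\eqref{Measurenu}. Your bookkeeping of the $N!$ factors and the factorization of $e^{-\int_0^\beta W_\infty}$ into the $V_{\rm o}$, $\mu$ and $k_\sigma$ pieces is correct. Two remarks, one of which is a genuine gap.

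The gap is the clause ``any contributing $\tau$ must preserve flavours, i.e.\ belongs to the subgroup $\mathfrak{S}_N({\bs X}_0)$ of Definition \ref{DynP}.'' By Definition \ref{DynP}, $\mathfrak{S}_N({\bs X}_0)$ is the set of \emph{dynamically allowed} permutations, characterized by $\la \delta_{{\bs X}_0}|{\bs T}^n\delta_{\tau{\bs X}_0}\ra\neq 0$ for some $n$; this is strictly stronger than flavour preservation, and the distinction matters because the sign fact $\mathrm{sgn}(\tau)=1$ is only asserted for dynamically allowed $\tau$. A flavour-preserving \emph{odd} permutation whose event $\{{\bs X}_\beta=\tau{\bs X}_0\}\cap D_\infty(\beta)$ had positive measure would both fall outside $L_\beta$ and destroy positivity of your density, so you must actually show this event is null for every $\tau\notin\mathfrak{S}_N({\bs X}_0)$. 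This is precisely what the paper's Step 1 (Eq.~\eqref{DAConc}) does, via a Taylor/Trotter expansion of the semigroup combined with Lemma \ref{EquivPath}. In your pathwise setup the repair is short: almost surely a realization in $D_\infty(\beta)$ consists of finitely many single-particle nearest-neighbour hops staying in $\Omega^N_{\neq,\infty}$, so the visited configurations form a path in the sense of Lemma \ref{EquivPath}(ii) from ${\bs X}_0$ to ${\bs X}_\beta$, and ${\bs X}_\beta=\tau{\bs X}_0$ then forces $\tau\in\mathfrak{S}_N({\bs X}_0)$ --- but this step must be said, and it is the one place where your argument as written would fail.

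On the sign itself: the paper does not reprove $\mathrm{sgn}(\tau)=1$ for dynamically allowed $\tau$; it is taken as a cited input from \cite{Aizenman1990} in Definition \ref{DynP}. Your sketch (closed hole walk on a bipartite lattice, hence an even number of transpositions) is the correct mechanism, so treating it as ``the technical heart'' overstates what this particular proof needs, though your fallback of retaining a signed measure for the bare identity is a legitimate observation.
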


To prove Theorem \ref{FKIUINFINITY}, we need to make some preparations.
First, we will construct a complete orthonormal system (CONS) for $\F_N=\ell^2_{\mathrm{as}}(\Omega_{\neq, \infty}^N)$.
Given ${\bs X} \in \Omega^N$, we define
\be
\delta_{\bs X}=\otimes_{j=1}^N\delta_{X^{(j)}} \in \ell^2(\Omega^N),\quad e_{{\bs X}} =A_N \delta_{{\bs X}}\in\ell_{\mathrm{as}}^2(\Omega_{\neq, \infty}^N) ,
\ee
where $A_N$ is the antisymmetrizer defined by Eq.   \eqref{DefAN}. It can be readily verified that $\{\delta_{{\bs X}}\, |\, {\bs X} \in \Omega^N\}$ forms a CONS for $\ell^2(\Omega^N)$.
In order to construct a CONS for $\ell_{\mathrm{as}}^2(\Omega_{\neq, \infty}^N)$, we need to make further preparations. We observe that $e_{\tau {\bs X}}=\mathrm{sgn}(\tau) e_{{\bs X}}$ holds for all $\tau\in \mathfrak{S}_N$. With this in mind, we introduce an equivalence relation in $\Omega_{\neq, \infty}^N$ as follows:
Given ${\bs X}, {\bs Y} \in \Omega_{\neq, \infty}^N$, if there exists $\tau\in \mathfrak{S}_N$ such that ${\bs Y}=\tau {\bs X}$, then we write ${\bs X} \equiv {\bs Y}$.
It can be easily verified that this binary relation defines an equivalence relation.
Let $[{\bs X}]$ denote the equivalence class to which ${\bs X}$ belongs. We will often abbreviate $[{\bs X}]$ as ${\bs X}$ when there is no ambiguity.
We denote $[\Omega_{\neq, \infty}^N]$ as the quotient set $\Omega_{\neq, \infty}^N/\equiv$.
It can be shown that  $\{e_{\bs X}\, |\, {\bs X} \in [{\Omega_{\neq, \infty}^N}]\}$ forms a CONS for $\ell^2_{\mathrm{as}}(\Omega_{\neq, \infty}^N)$. This CONS will be useful in our subsequent analysis.

Let $Q_{\Lambda}$ be the orthogonal projection from $\ell^2(\Omega^N)$ to $\F_N$ as defined in  \eqref{DefQ}.
We denote ${\bs T}$ as the Hamiltonian for the free fermions:
\be
{\bs T}=Q_{\Lambda} T_{{\bs b} ={\bs 0}, \mu=0}Q_{\Lambda},
\ee
where $T_{{\bs b}={\bs 0}, \mu=0}$ represents the operator obtained by setting ${\bs b}={\bs 0}$ and $\mu=0$ in the defining equation of $T$, i.e., Eq.  \eqref{NELHAMI}.

We are now ready to state the precise definition of  $\mathfrak{S}_N({\bs X})$  that appears in Eq. \eqref{DefDP}:
\begin{Def}\label{DynP}\rm

Let ${\bs X}\in \Omega^N_{\neq, \infty}$.
We define a permutation $\tau\in \mathfrak{S}_N$ to be {\it dynamically allowed} associated with ${\bs X}$ if there exists an $n\in\BbbZ+$ such that
\begin{align}
\la \delta_{{\bs X}}|{\bs T}^n \delta_{\tau{\bs X}}\ra \neq 0. \label{DyAlll1}
\end{align}
We denote the set of all dynamically allowed permutations associated with ${\bs X}$ as $\mathfrak{S}_N({\bs X})$.
It is worth noting that if $\tau$ is dynamically allowed, it is always an even permutation, i.e., $\mathrm{sgn}(\tau)=1$ \cite{Aizenman1990}.

\end{Def}

To provide a characterization of dynamically allowed permutations, we introduce several terms.
Consider ${\bs X}=(X^{(j)})_{j=1}^N\in \Omega^N_{\neq, \infty}$ and $ {\bs Y}=(Y^{(j)})_{j=1}^N\in \Omega^N_{\neq, \infty}$.
We define the distance between ${\bs X}$ and ${\bs Y}$ as
\be
\|{\bs X}-{\bs Y}\|_{\infty}=\max_{j=1, \dots, N}\|x^{(j)}-y^{(j)}\|_{\infty},\label{MaxNorm}
\ee 
where $x^{(j)}$ (resp. $y^{(j)}$) represents the spatial component of $X^{(j)}$ (resp. $Y^{(j)}$).
Here, the norm $\|\cdot \|_{\infty}$ present on the right-hand side of Eq. \eqref{MaxNorm} corresponds to the maximum norm defined over the set $\Lambda$
: $\|x\|_{\infty}=\max_{j=1, \dots, n} |x_j|\ (x=(x_j)\in \Lambda)$.
We say that ${\bs X}$ and ${\bs Y}$ are {\it neighbors} if $\|{\bs X}-{\bs Y}\|_{\infty}=1$ and the flavor components of $X^{(j)}$ and $Y^{(j)}$ are equal for all $j=1, \dots, n$.
An {\it edge} is a pair $\{ {\bs X}, {\bs Y}\} \in \Omega^N_{\neq, \infty} \times \Omega^N_{\neq, \infty}$ where ${\bs X}$ and ${\bs Y}$ are neighbors.
A {\it path} is a sequence $({\bs X}_i)_{i=1}^m \subset \Omega_{\neq, \infty}^N$ such that $\{{\bs X}_i, {\bs X_{i+1}}\}$ is an edge for all $i$.
For a given edge $\{{\bs X}, {\bs Y}\}$, we define the linear operator $Q({\bs X}, {\bs Y})$ acting on $\ell^2(\Omega_{\neq, \infty}^N)$ as
\begin{align}
Q({\bs X}, {\bs Y})= |\delta_{{\bs X}}\ra \la \delta_{{\bs Y}}|.
\end{align}
This operator plays a role in describing the following lemma, which provides a characterization of dynamically allowed permutations.

\begin{Lemm}\label{EquivPath}
Let $\tau\in \mathfrak{S}_N$ and let ${\bs X}\in\Omega^N_{\neq, \infty}$.
The following {\rm (i)} and {\rm (ii)} are mutually equivalent:
 \begin{itemize}
 \item[{\rm (i)}] $\tau$ is dynamically allowed associated with  ${\bs X}$;
 \item[{\rm (ii)}] there exists a path $({\bs X}_i)_{i=1}^m$ satisfying the following:
 \begin{itemize}
 \item[] ${\bs X}_1={\bs X}$ and ${\bs X}_m=\tau {\bs X}$;
 \item[] $
 \la \delta_{{\bs X}}|Q({\bs X}_1, {\bs X}_2) Q({\bs X}_2, {\bs X}_3) \cdots Q({\bs X}_{m-1}, {\bs X}_m) \delta_{\tau {\bs X}}\ra >0.
 $
 \end{itemize}
 \end{itemize}
 \end{Lemm}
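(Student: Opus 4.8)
The plan is to prove the equivalence by unwinding the definition of the free fermion Hamiltonian $\bs T$ in the $\delta_{\bs X}$ basis and recognizing that its matrix elements encode exactly the edge structure. First I would analyze the single application of $\bs T$: since $\bs T = Q_\Lambda T_{\bs b = \bs 0, \mu = 0} Q_\Lambda$ is built from nearest-neighbor hopping terms of the form $c^*_{x,\sigma} c_{y,\sigma}$ (summed over the $N$ tensor slots), a standard computation shows that for $\bs X, \bs Y \in \Omega^N_{\neq,\infty}$ one has $\la \delta_{\bs X} | \bs T \delta_{\bs Y} \ra \neq 0$ if and only if $\bs X$ and $\bs Y$ differ by moving a single fermion to a nearest-neighbor site (with flavor unchanged), i.e.\ if and only if $\{\bs X, \bs Y\}$ is an edge in the sense defined above, possibly after a permutation of the labels; moreover in that case the matrix element is a nonzero multiple of $t$ with a sign that can be absorbed. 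The projection $Q_\Lambda$ only serves to restrict all intermediate configurations to $\Omega^N_{\neq,\infty}$, which is precisely the constraint built into the definitions of edge and path.

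Next I would iterate this: expanding $\bs T^n = \sum \bs T \cdots \bs T$ and inserting the resolution of the identity $\sum_{\bs Z} |\delta_{\bs Z}\ra\la\delta_{\bs Z}|$ (over $\bs Z \in \Omega^N_{\neq,\infty}$, using that $\{\delta_{\bs X}\}$ is a CONS) between consecutive factors, we obtain
\begin{align}
\la \delta_{\bs X} | \bs T^n \delta_{\tau \bs X} \ra = \sum_{\bs Z_1, \dots, \bs Z_{n-1}} \la \delta_{\bs X}| \bs T \delta_{\bs Z_1}\ra \la \delta_{\bs Z_1}| \bs T \delta_{\bs Z_2}\ra \cdots \la \delta_{\bs Z_{n-1}}| \bs T \delta_{\tau \bs X}\ra. \nonumber
\end{align}
By the single-step analysis, each nonzero summand corresponds to a sequence $\bs X = \bs Z_0, \bs Z_1, \dots, \bs Z_n = \tau \bs X$ of configurations in which each consecutive pair is an edge (up to relabeling), i.e.\ a path from $\bs X$ to $\tau\bs X$ in the graph-theoretic sense of the excerpt. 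Conversely, given such a path realizing condition (ii), the corresponding product $\la \delta_{\bs X}|Q(\bs X_1, \bs X_2)\cdots Q(\bs X_{m-1}, \bs X_m)\delta_{\tau\bs X}\ra > 0$ forces each factor $\la\delta_{\bs X_i}|\bs T \delta_{\bs X_{i+1}}\ra$ to be nonzero, and taking $n = m-1$ one checks that the corresponding term in the expansion of $\la\delta_{\bs X}|\bs T^n \delta_{\tau\bs X}\ra$ is nonzero. The only subtlety is the possibility of cancellation among several path-terms contributing to the same $\bs T^n$ matrix element; this is where one uses the sign remark in Definition \ref{DynP} (dynamically allowed permutations are even) together with the fact that all the operators $Q(\bs X_i, \bs X_{i+1})$ and the hopping amplitudes can be arranged with consistent (nonnegative) signs on the relevant subspace, so that distinct path contributions cannot cancel — equivalently, one works with $|t|$ and tracks signs via $\mathrm{sgn}(\tau)$ exactly as in \cite{Aizenman1990}.

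I would organize the write-up in two directions. For (i) $\Rightarrow$ (ii): assume $\la\delta_{\bs X}|\bs T^n\delta_{\tau\bs X}\ra \neq 0$, expand as above, pick a nonvanishing summand, and read off the path; adjacent configurations are neighbors because a single hopping term was applied, and they lie in $\Omega^N_{\neq,\infty}$ because of $Q_\Lambda$. Reinterpreting the amplitudes as $Q(\cdot,\cdot)$ operators and using positivity (sign consistency) gives the strict inequality in (ii). For (ii) $\Rightarrow$ (i): from a path with positive $Q$-product, set $n = m-1$; the product of matrix elements $\prod_i \la\delta_{\bs X_i}|\bs T\delta_{\bs X_{i+1}}\ra$ is a nonzero contribution to $\la\delta_{\bs X}|\bs T^{m-1}\delta_{\tau\bs X}\ra$, and again sign consistency prevents cancellation with other contributions, so the total is nonzero.

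The main obstacle I anticipate is precisely the no-cancellation argument: showing that when several paths of the same length connect $\bs X$ to $\tau\bs X$, their contributions to $\la\delta_{\bs X}|\bs T^n\delta_{\tau\bs X}\ra$ add up without destructive interference. This hinges on choosing the phase conventions for the CONS $\{e_{\bs X}\}$ (equivalently, a consistent ordering of fermion labels along each configuration) so that all hopping amplitudes restricted to $\Omega^N_{\neq,\infty}$ have the same sign, exactly as exploited in the Nagaoka--Thouless and Aizenman--Lieb arguments; I would cite \cite{Aizenman1990} for this sign bookkeeping and the evenness of dynamically allowed permutations rather than reproving it, and devote the bulk of the proof to the combinatorial correspondence between nonzero $\bs T^n$ matrix elements and paths.
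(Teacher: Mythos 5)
The paper does not actually prove Lemma \ref{EquivPath}; it cites \cite{Miyao2020-2}. Your outline is the natural argument one would give, and the direction (i) $\Rightarrow$ (ii) is essentially fine: expand ${\bs T}^n$ in the product basis $\{\delta_{\bs Z}\}$, pick a nonvanishing summand, delete the diagonal factors, and read off a path. One small correction there: the lemma is phrased entirely in terms of the non-antisymmetrized vectors $\delta_{\bs X}$, so there are no fermionic phases and no ``permutation of the labels'' to worry about --- every off-diagonal matrix element of ${\bs T}$ between configurations in $\Omega^N_{\neq,\infty}$ is literally $-t$ (one particle hops, same slot, same flavor), and the whole discussion of choosing phase conventions for $\{e_{\bs X}\}$ and of $\mathrm{sgn}(\tau)=1$ is irrelevant to this lemma; that evenness fact is used elsewhere in the paper, not here.

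The genuine gap is in (ii) $\Rightarrow$ (i), and your proposed fix does not close it. Writing ${\bs T}=O+\mathcal{D}$ with $O$ the off-diagonal part (all entries $-t$ on edges) and $\mathcal{D}\ge 0$ diagonal, a length-$n$ walk contributing to $\la\delta_{\bs X}|{\bs T}^n\delta_{\tau{\bs X}}\ra$ with $k$ hops and $n-k$ holding steps carries the sign $(-1)^k$. Since the $\|\cdot\|_\infty$-adjacency on $\Lambda$ is not bipartite for $d\ge 2$, walks of both parities of $k$ can join $\bs X$ to $\tau{\bs X}$ at the \emph{same} total length $n=m-1$, so ``sign consistency'' does not prevent cancellation for the particular power you chose; the fermionic bookkeeping from \cite{Aizenman1990} is not the mechanism that saves you. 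The standard repair is either (a) take $n$ equal to the minimal hop-distance from $\bs X$ to $\tau{\bs X}$, so that every contributing length-$n$ walk is a pure hop-walk and all terms share the sign $(-t)^n$, giving $\la\delta_{\bs X}|{\bs T}^n\delta_{\tau{\bs X}}\ra=(-t)^n\cdot\#\{\text{geodesics}\}\neq 0$; or (b) pass to $A=cI-{\bs T}$ with $c$ large, which has nonnegative entries so that $\la\delta_{\bs X}|A^n\delta_{\tau{\bs X}}\ra>0$ whenever a path exists, and then note that $\la\delta_{\bs X}|p({\bs T})\delta_{\tau{\bs X}}\ra=0$ for all polynomials $p$ would force $\la\delta_{\bs X}|A^n\delta_{\tau{\bs X}}\ra=0$ as well. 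A further caveat you should flag: the paper's ``neighbor'' relation $\|{\bs X}-{\bs Y}\|_\infty=1$ allows several particles to move simultaneously, for which $\la\delta_{\bs X}|{\bs T}\delta_{\bs Y}\ra=0$; so your claimed equivalence ``nonzero matrix element iff edge'' holds only in one direction, and for (ii) $\Rightarrow$ (i) one must additionally decompose such an edge into single hops staying inside $\Omega^N_{\neq,\infty}$.
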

See \cite{Miyao2020-2} for a proof of this lemma.

We are now ready to prove Theorem \ref{FKIUINFINITY}.
\begin{proof}[Proof of Theorem \ref{FKIUINFINITY}]
We divide the proof into two parts.

{\bf Step 1.}
Consider $\ell^{\infty}_{\rm s}(\Omega^N)$ as the Banach space comprising all symmetric functions defined on $\Omega^N$, endowed with the infinite norm.
Let $F_0, F_1, \dots, F_{n-1}$ denote strictly positive elements in $\ell_{\rm s}^{\infty}(\Omega^N)$.
We define the operator $K_n$ as follows:
  \be
K_n
=F_0 e^{-t_1H_{\Lambda}({\bs b})} F_1 e^{-(t_2-t_1)H_{\Lambda}({\bs b})}F_2 \cdots F_{n-1}e^{-(\beta-t_{n-1}) H_{\Lambda}({\bs b})} .
\ee
Let ${\bs X}\in \Omega_{\neq, \infty}^N$ be fixed arbitrarily.
We claim  that if $\tau$ is {\it not} dynamically allowed associated with ${\bs X}$, then one obtains, for all $n\in \BbbN$ and $
0<t_1<t_2<\cdots <t_{n-1}<\beta
$, that 
\begin{align}
\la \delta_{{\bs X}}|K_{n} \delta_{\tau{\bs X}}\ra=0. \label{DAConc}
\end{align}
In this step, we will demonstrate the proof of this equation in a step-by-step manner.

 Let us first consider the case where $W_{\infty}\equiv0$. For the sake of simplicity, let us  assume that $n=2$.
  Because $F_0$ and $F_1$ are multipication operators, and $\tau$ is not dynamically allowed, we can deduce that 
  \begin{align}
  \la \delta_{{\bs X}}|F_0 (-{\bs T})^{n_1} F_1 (-{\bs T})^{n_2}\delta_{\tau{\bs X}}\ra=0 \label{DyAll2}
  \end{align}
    for all $n_1, n_2\in \BbbZ_+$.
    To establish this, it is worth noting that we can express ${\bs T}$ as: 
 \begin{align}
 {\bs T}=\sum_{\{{\bs X}, {\bs Y}\}} C_{{\bs X}, {\bs Y}}Q({\bs X}, {\bs Y})
+\mathcal{D}, \label{OfDD}
 \end{align}
 where 
 $\sum_{\{{\bs X}, {\bs Y}\}}$ denotes the summation over all edges,  the coefficients $C_{{\bs X}, {\bs Y}}$ satisfy
 $C_{{\bs X}, {\bs Y}}<0
 $  for each edge $\{{\bs X}, {\bs Y}\}$,  and $\mathcal{D}$ represents a multiplication operator.
 By  using the formula \eqref{OfDD}, 
    Eq.  (\ref{DyAll2}) follows from the following property:
    \begin{align}
    \la \delta_{{\bs X}}|Q({\bs X}_1, {\bs X}_2) Q({\bs X}_2, {\bs X}_3) \cdots Q({\bs X}_{m-1}, {\bs X}_m) \delta_{\tau {\bs X}}\ra =0
    \end{align}
    for any path $({\bs X}_i)_{i=1}^m$. This property is evident from Lemma \ref{EquivPath}.
   Using (\ref{DyAll2}), we can prove (\ref{DAConc}) as follows:
   \begin{align}  
     \la \delta_{{\bs X}}|K_{n} \delta_{\tau{\bs X}}\ra 
     =\sum_{n_1=1}^{\infty}\sum_{n_2=1}^{\infty}\frac{t_1^{n_1} (t_2-t_1)^{n_2}}{n_1!n_2!}\la \delta_{{\bs X}}|F_0 (-{\bs T})^{n_1} F_1 (-{\bs T})^{n_2}\delta_{\tau{\bs X}}\ra=0.
      \end{align}
Similarly, we can prove \eqref{DAConc} for general $n$, when $W_{\infty}\equiv 0$.

Next, let us consider the case where $W_{\infty} \neq 0$. 
Once again, we will focus on the case $n=2$ for simplicity. By employing Trotter's formula, we obtain
\begin{align}
&\la \delta_{{\bs X}}|K_{n} \delta_{\tau{\bs X}}\ra\no
=&\lim_{N_1\to \infty}\lim_{N_2\to \infty}
\big\la \delta_{\bs X}| F_0\big(
e^{-t_1 {\bs T}/N_1} e^{-t_1W_{\infty}/N_1}
\big)^{N_1 }  F_1\big (
e^{-(t_2-t_1) {\bs T}/N_2} e^{-(t_2-t_1)W_{\infty}/N_2}
\big)^{N_1 }\delta_{\tau {\bs X}}\big\ra. \label{TKN2}
\end{align}
 By applying  the claim for the case where  $W_{\infty} \equiv 0$,  we observe that the right-hand side of (\ref{TKN2}) evaluates to zero.
 Similarly, we can establish the assertion for general $n$.
Thus, we have successfully completed the proof of Eq. \eqref{DAConc}.

{\bf Step 2.}
By using Theorem \ref{FKIHubbrdUInf} and (\ref{DAConc}), we get 
\begin{align}
Z_{\Lambda}(\beta; {\bs b})
=&\sum_{{\bs X} \in [\Omega_{\neq, \infty}^N]} \la e_{{\bs X}}\, |\, e^{-\beta H_{\Lambda}({\bs b})}e_{{\bs X}}\ra\no
=& \sum_{{\bs X} \in [\Omega_{\neq, \infty}^N]} \sum_{\tau \in \mathfrak{S}_N} \frac{\mathrm{sgn}(\tau)}{N!}
\la \delta_{{\bs X}}\, |\,  e^{-\beta H_{\Lambda}({\bs b})} \delta_{\tau {\bs X}}\ra\no
=& \sum_{{\bs X} \in [\Omega_{\neq, \infty}^N]} \sum_{\tau \in \mathfrak{S}_N({\bs X})} \frac{\mathrm{sgn}(\tau)}{N!}
\la \delta_{{\bs X}}\, |\, e^{-\beta H_{\Lambda}({\bs b})} \delta_{\tau {\bs X}}\ra\no
=& \sum_{{\bs X} \in [\Omega_{\neq, \infty}^N]} \sum_{\tau \in \mathfrak{S}_N({\bs X})} \frac{\mathrm{sgn}(\tau)}{N!}
\Ex_{{\bs X}} \Big[
\mathbbm{1}_{\{{\bs X}_{\beta}=\tau{\bs X}\}\cap D_{\infty}(\beta)} e^{-\int_0^{\beta} W_{\infty}({\bs X}_s) ds}
\Big]. \label{TraF1}
\end{align}
Let us proceed by defining the measure on $L_{\beta}$ as follows:
\begin{align}
\mu_{\beta}(B)=\sum_{{\bs X} \in [\Omega_{\neq, \infty}^N]} \sum_{\tau\in \mathfrak{S}_N({\bs X})} \frac{1}{N!} 
\Ex_{\bs X}\Bigg[
\mathbbm{1}_B \mathbbm{1}_{\{{\bs X}_{\beta}=\tau{\bs X}\} \cap D_{\infty}(\beta)}
e^{-\int_0^{\beta} W_{\infty, {\bs b}={\bs 0}}({\bs X}_s) ds}
\Bigg], \label{Measurenu}
\end{align}
where $W_{\infty, {\bs b}={\bs 0}}({\bs X})$ is obtained by substituting ${\bs b}={\bs 0}$ into the defining equation of $W_{\infty}({\bs X})$.
Because $\mathrm{sgn}(\tau)=1$ for all $\tau\in \mathfrak{S}_N({\bs X})$, we finally  obtain the desired assertion stated in  Theorem \ref{FKIUINFINITY}.
\end{proof}

\section{Proofs of Theorems \ref{Main1} and \ref{Main2}} \label{Sec4}

\subsection{Random loop representations}
For a given ${\bs m}\in L_{\beta}$, we define the particle world lines  associated with  the path $({\bs X}_s({\bs m}))_{s\in [0, \beta]}$ as follows: 
\be
{\bs x}_t({\bs m})=(x_t^{(1)}({\bs m}), \dots, x_t^{(N)}({\bs m})),
\ee
 where $x^{(i)}_t({\bs m})$ denotes the spatial component of $X_t^{(i)}({\bs m})$.
Given that each $x_t^{(i)}$ takes values on $\Lambda$, it maintains a piecewise constant behavior with respect to time and undergoes transitions to nearest neighbor sites at random times. As a result, the particle world lines can be visually represented as a collection of polygonal lines in the space-time picture.

\begin{figure}[t]
\centering
 \includegraphics[scale=0.5]{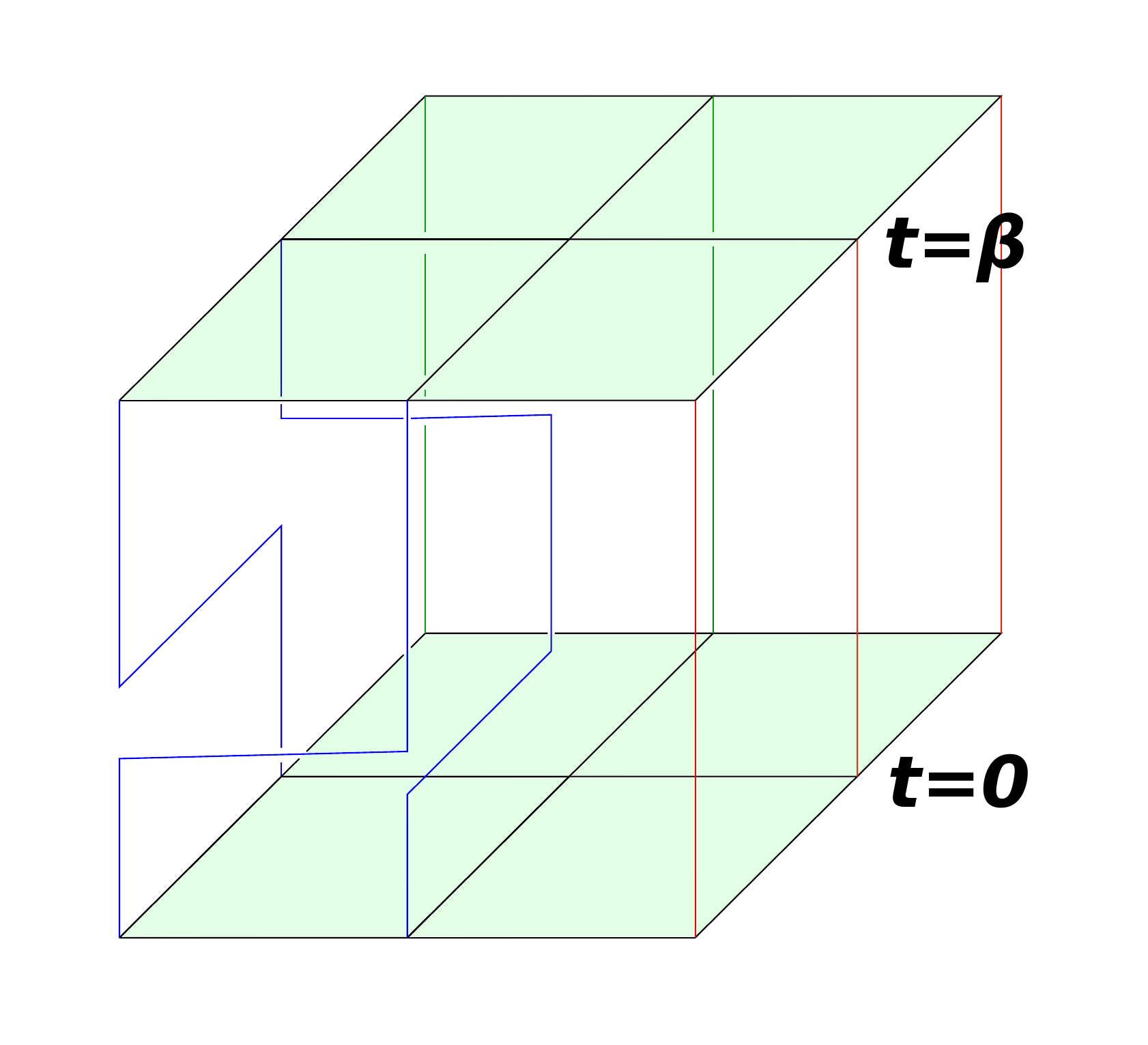}
  \caption{
The figure illustrates loops formed by the world lines of eight particles on the lattice $\Lambda=\{-1, 0, 1\}\times \{-1, 0, 1\}$. The colors assigned to the world lines indicate the flavors of the fermions.
        }\label{Fig21}
 \end{figure}

 In accordance with the approach outlined in  \cite{Aizenman1990}, we can establish a correspondence between arbitrary particle world lines and a collection of loops within the space-time picture through the following steps:
 \begin{itemize}
\item We initiate the process by plotting the loops starting from the positions of each particle at $t=0$.
\item Subsequently, we trace the temporal evolution of a particle's spatial position in the space-time picture. It is important to note that, due to the infinite strength of the Coulomb interaction, the particles never come into contact with one another.
\item Once the trace line reaches the time $t=\beta$, it reemerges in the same location at $t=0$ by considering time as periodic.
\item We continue these steps until the trace line forms a closed loop.
\end{itemize}

Every path can be fully characterized by the following three conditions:
the initial configuration of the particles: ${\bs x}_0$, the particle world lines, and the flavor assigned along the world lines.
Hence, it is possible to uniquely specify each path by assigning a flavor to each loop.
 Figure \ref{Fig21} illustrates typical loops composed of particle world lines in a two-dimensional system.

Let us consider the set of loops $\{\ell_1, \dots, \ell_k\}$ associated with a given path $({\bs X}_s({\bs m}))_{s\in [0, \beta]}$.
 In general, a combination of a loop $\ell$ and a flavor $\sigma$, denoted as $\gamma=(\ell, \sigma)$, is referred to as a {\it flavored loop}. 
Consequently, the path $({\bs X}_s({\bs m}))_{s\in [0, \beta]}$ can be uniquely identified by the collection of flavored loops $\varGamma({\bs m})=\{\gamma_1, \dots, \gamma_k\}$, where $\gamma_j=(\ell_j, \sigma_j)$.
This set $\varGamma({\bs m})$ is termed the {\it flavored random loop}.

By using Theorem \ref{FKIUINFINITY}, we obtain the following:

\begin{Thm}
The partition function has the following random loop representation:
\begin{align}
Z_{\Lambda}(\beta;  {\bs b})
=\int_{L_{\beta}} d\mu_{\beta} 
\prod_{\gamma\in \varGamma}
\exp\{ \beta w_{\gamma} f(\sigma_{\gamma})\}, 
\end{align}
where, for each $\gamma=(\ell_{\gamma}, \sigma_{\gamma})\in \varGamma$,
$w_{\gamma}$ denotes the absolute value of
the  winding number of the loop $\ell_{\gamma}$, and the function $f$ on $\{1, \dots, n\}$ is given by 
\begin{align}
 f(\sigma)
 =\begin{cases}
 b_1 & (\sigma=1)\\
 -b_{\sigma-1}+b_{\sigma}  & (2\le \sigma\le n-1)\\
 - b_{n-1} & (\sigma=n).
 \end{cases}
\end{align}
\end{Thm}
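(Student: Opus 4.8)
The plan is to start from the Feynman--Kac--It\^o formula for the partition function established in Theorem~\ref{FKIUINFINITY}, namely
\begin{align}
Z_{\Lambda}(\beta; {\bs b})
=
\int_{L_{\beta}}d \mu_{\beta}
\prod_{j=1}^{N}\prod_{\sigma=1}^{n-1}\exp\Bigg\{\int_0^{\beta} b_{\sigma} k_{\sigma}(X^{(j)}_s)\,ds\Bigg\},
\end{align}
and to rewrite the integrand purely in terms of the flavored random loop $\varGamma({\bs m})=\{\gamma_1, \dots, \gamma_k\}$ attached to a configuration ${\bs m}\in L_{\beta}$. First I would recall that on $L_{\beta}\subset D_{\mathrm{S}}$ the flavor component $\sigma_s^{(j)}({\bs m})$ is constant in $s$, so the time integral $\int_0^{\beta} b_{\sigma} k_{\sigma}(X_s^{(j)})\,ds$ only depends on $\beta$ and on the (constant) flavor carried by the $j$-th world line, since $k_{\sigma}(x,\tau)=\delta_{\sigma,\tau}-\delta_{\sigma+1,\tau}$ does not depend on the spatial component. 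Hence $\prod_{\sigma=1}^{n-1}\exp\{\int_0^{\beta} b_{\sigma}k_{\sigma}(X_s^{(j)})\,ds\}=\exp\{\beta\,f(\tau_j)\}$, where $\tau_j$ is the flavor of the $j$-th particle and $f$ is exactly the function displayed in the statement: telescoping $\sum_{\sigma=1}^{n-1}b_{\sigma}(\delta_{\sigma,\tau}-\delta_{\sigma+1,\tau})$ gives $b_1$ if $\tau=1$, $-b_{\tau-1}+b_{\tau}$ if $2\le\tau\le n-1$, and $-b_{n-1}$ if $\tau=n$.

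Next I would group the $N$ factors $\prod_{j=1}^N e^{\beta f(\tau_j)}$ according to which loop each world line belongs to. A loop $\ell_{\gamma}$ with winding number (in the periodic-time direction) of absolute value $w_{\gamma}$ is traversed by a world line that, over the interval $[0,\beta]$, visits $w_{\gamma}$ of the original particle positions before closing up; equivalently, the world-line segments making up $\ell_{\gamma}$ come from exactly $w_{\gamma}$ of the $N$ particles, all of which must carry the same flavor $\sigma_{\gamma}$ because along a loop the flavor is transported continuously and the loop is connected (this is where $D_{\mathrm{P},\infty}(\beta)$ together with $D_{\mathrm{S}}$ is used, and where the matching condition ${\bs X}_{\beta}({\bs m})=\tau{\bs X}_0({\bs m})$ with $\tau\in\mathfrak{S}_N({\bs X}_0)$ forces the particles identified by a loop to be mutually reachable, hence all of the same flavor in the configuration $e_{{\bs X}}$). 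Therefore $\prod_{j=1}^N e^{\beta f(\tau_j)}=\prod_{\gamma\in\varGamma} e^{\beta w_{\gamma} f(\sigma_{\gamma})}$, since the $N$ factors partition into the loops, each loop contributing $w_{\gamma}$ equal factors $e^{\beta f(\sigma_{\gamma})}$.

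Substituting this identity back into the formula from Theorem~\ref{FKIUINFINITY} yields
\begin{align}
Z_{\Lambda}(\beta; {\bs b})
=\int_{L_{\beta}} d\mu_{\beta}
\prod_{\gamma\in \varGamma}
\exp\{ \beta w_{\gamma} f(\sigma_{\gamma})\},
\end{align}
which is the claimed random loop representation. The main obstacle I anticipate is the bookkeeping in the middle step: making rigorous the correspondence between world lines and loops (already sketched after Figure~\ref{Fig21}), and in particular verifying that the number of original particle indices whose segments constitute a given loop equals the absolute winding number $w_{\gamma}$, and that all these indices carry a common flavor. This amounts to unwinding the definitions of $D_{\mathrm{P},\infty}(\beta)$, the cycle structure of the dynamically allowed permutation $\tau$, and the constancy of flavor on $D_{\mathrm{S}}$; everything else is the elementary telescoping computation for $f$ and a reindexing of a finite product.
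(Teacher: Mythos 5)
Your proposal is correct and follows essentially the same route as the paper: both start from Theorem \ref{FKIUINFINITY}, use the telescoping identity $\sum_{\sigma=1}^{n-1} b_{\sigma} k_{\sigma}(X)=f(\tau)$ for a particle of flavor $\tau$, regroup the $N$ per-particle factors into loops via the index sets of world lines composing each loop, and invoke the facts that the flavor is constant on a loop and that the number of particle labels in a loop equals $|w_{\gamma}|$. The only (immaterial) difference is the order of operations --- you telescope per particle before grouping into loops, whereas the paper groups first and then evaluates the sum over each loop's index set.
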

\begin{proof}
For a given ${\bs m}\in L_{\beta}$, let us consider the flavored loops $\varGamma(\bs m)=\{\gamma_1, \dots, \gamma_k\}$ corresponding to the path $(\bs X_s({\bs m}))_{s}$.
Expressing each flavored loop as $\gamma_j=(\ell_j, \sigma_j)$, we can represent each loop $\ell_j$ in terms of the particle world lines as follows: $\ell_j=\bigcup_{s\in [0, \beta]}\bigcup_{i\in I_j} x_s^{(i)}(\bs m)$,
where $I_j$ denotes the set of particle labels that compose $\ell_j$.
Using these symbols, we get
\begin{align}
\prod_{j=1}^{N}\prod_{\sigma=1}^{n-1}\exp\Bigg\{\int_0^{\beta} b_{\sigma} k_{\sigma}(X^{(j)}_s({\bs m}))ds\Bigg\}
&=\prod_{j=1}^k \underbrace{\prod_{\sigma=1}^{n-1} \prod_{i\in I_j} \exp\Bigg\{\int_0^{\beta} b_{\sigma} k_{\sigma}(X^{(i)}_s({\bs m}))ds\Bigg\}}_{=: \mathscr{C}(\gamma_j)}\no
&=\prod_{j=1}^k\mathscr{C}(\gamma_j).
\end{align}
Recalling \eqref{Defk}, we obtain
\begin{align}
\sum_{i\in I_j} \sum_{\sigma=1}^{n-1} b_{\sigma} k_{\sigma}(X_s^{(i)}({\bs m}))
&=\sum_{i\in I_j} \sum_{x\in \Lambda}\sum_{\sigma=1}^{n-1} b_{\sigma}\Big\{
\delta_{(x, \sigma)} (X_s^{(i)}({\bs m}))-\delta_{(x, \sigma+1)} (X_s^{(i)}({\bs m}))
\Big\}\no
&=\sum_{i\in I_j} \sum_{x\in \Lambda}\delta_{x, x_s^{(i)}({\bs m})} f(\sigma_{j})=|I_j| f(\sigma_{j}).
\end{align}
Combining this with the fact $|I_j|=w(\gamma_j)$, we conclude  that $\mathscr{C}(\gamma)=e^{\beta w(\gamma) f(\sigma_{\gamma})}$.
\end{proof}

\subsection{Proof of Theorem \ref{Main1}}
It should be noted that for a given set of flavored random loops $\varGamma({\bs m})=\{\gamma_1, \dots, \gamma_k\}$, the corresponding collection of winding numbers ${\bs w}_{\varGamma({\bs m })}=(w_{\gamma_1}, \dots, w_{\gamma_k})$ forms a partition of $N$:
\be
\sum_{j=1}^k w_{\gamma_j}=N.
\ee
Hence, as ${\bs m}$ varies within $L_{\beta}$, ${\bs w}_{\varGamma({\bs m})}$ traverses various partitions of $N$.

To state a technical lemma, we need to introduce some symbols.
Let $L$ denote the collection of loops associated with $\varGamma(\bs m)$, and let $F$ represent the collection of flavors associated with $\varGamma(\bs m)$:
\be
L(\varGamma(\bs m))=(\ell_{\gamma} :  \gamma\in \varGamma(\bs m)),
\ \ 
F(\varGamma(\bs m))=(\sigma_{\gamma} : \gamma\in \varGamma(\bs m)).
\ee
Note that, if $\# \varGamma({\bs m})=k$, then 
$F(\varGamma({\bs m}))$ belongs to $\mathscr{F}_k:=\{1, \dots, n\}^k$.
Here, the notation $\# S$ represents the cardinality of a given set $S$.
Given  a collection of flavors $F\in \mathscr{F}_k$, we define
\be
S(k; F)=\{ {\bs m} \in L_{\beta} : \#\varGamma(\bs m)=k,\ \ F(\varGamma({\bs m}))=F\}.
\ee
Next, we partition the set of partitions of $N$, denoted as $P_N$, in the following manner:
\be
P_N=\bigsqcup_{k=1}^NP_N(k),
\ee
where we define $P_N(k)=\{ {\bs n} \in P_N : \#{\bs n}=k\}$.
For each $F\in \mathscr{F}_k$ and ${\bs n}\in P_N(k)$, we define 
\be
S(k; F; {\bs n})=\{{\bs m} \in S(k; F) : {\bs w}_{\varGamma}={\bs n}\},
\ee
where ${\bs w}_{\varGamma}$ represents the collection of the winding numbers of the loops associated with $\varGamma$.
It is crucial for the proof of Theorem \ref{Main1} to note that $S(k; F)$ can be partitioned as follows:
\be
S(k; F)=\bigsqcup_{{\bs n} \in P_N(k)} S(k; F; {\bs n}).
\ee

\begin{Lemm}\label{Key}
Fix 
$k\in \{1, \dots, n\}$, arbitrarily. 
We also fix ${\bs n}\in P_N(k)$.
Then $\mu_{\beta}( S(k; F; {\bs n}))$  is independent of $F$ and constant on $\mathscr{F}_k$.
Setting
\be
D_{\beta}(\bs n)=\mu_{\beta}( S(k; F; {\bs n})), 
\ee
we obtain  the following identity:
\be
\sum_{F\in \mathscr{F}_k}\int_{S(k; F; {\bs n})} d\mu_{\beta} 
\prod_{\gamma\in \varGamma}
\exp\{ \beta w_{\gamma} f(\sigma_{\gamma})\}=
D_{\beta}({\bs n}) \mathcal{G}_{\beta}({\bs n}; {\bs b}),
\ee
where $ \mathcal{G}_{\beta}({\bs n}; {\bs b})$ is given by \eqref{DefG1}.

\end{Lemm}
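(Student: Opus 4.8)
The plan is to establish the two assertions of the lemma in turn, exploiting a flavor-relabeling symmetry of the measure $\mu_\beta$ together with the multiplicative structure of the loop weights. First I would prove the symmetry claim: that $\mu_\beta(S(k;F;{\bs n}))$ does not depend on $F\in\mathscr F_k$. The key observation is that the measure $\mu_\beta$ defined in \eqref{Measurenu} involves only $W_{\infty,{\bs b}={\bs 0}}$, which by construction contains no $b_\sigma$-dependence, and moreover the underlying Markov chain $(Y_n)$ has transition probabilities $\delta_{\sigma,\tau}t_{x,y}/d(y)$ that are diagonal and flavor-blind. Since on the event $D$ (hence on $L_\beta$) the flavor component of each particle is constant in time, a global relabeling of flavors along any given loop $\ell$—replacing $\sigma$ by $\sigma'\in\{1,\dots,n\}$—is a measure-preserving bijection of the relevant paths: it carries $S(k;F;{\bs n})$ onto $S(k;F';{\bs n})$ for the $F'$ obtained by the corresponding component change, preserves $D_{\infty}(\beta)$, $\{{\bs X}_\beta=\tau{\bs X}\}$, the loop structure, the winding numbers, and the weight $e^{-\int_0^\beta W_{\infty,{\bs b}={\bs 0}}}$. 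Iterating over loops, any $F\in\mathscr F_k$ can be reached from any other, so $\mu_\beta(S(k;F;{\bs n}))$ is constant on $\mathscr F_k$, justifying the definition of $D_\beta({\bs n})$.

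Next I would compute the weighted sum. Write ${\bs n}=\{n_j\}_{j=1}^k$. On $S(k;F;{\bs n})$ the loops are labeled so that $w_{\gamma_j}=n_j$, and the integrand factorizes over loops as $\prod_{j=1}^k e^{\beta n_j f(\sigma_j)}$ where $\sigma_j$ is the flavor assigned to $\ell_j$; this is exactly the content of the preceding theorem's identity $\mathscr C(\gamma)=e^{\beta w(\gamma)f(\sigma_\gamma)}$. Since the loop weight depends on ${\bs m}$ only through ${\bs n}$ (fixed) and $F$ (fixed on $S(k;F;{\bs n})$), the integral over $S(k;F;{\bs n})$ is simply $\mu_\beta(S(k;F;{\bs n}))\,\prod_{j=1}^k e^{\beta n_j f(\sigma_j)}=D_\beta({\bs n})\prod_{j=1}^k e^{\beta n_j f(\sigma_j)}$. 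Summing over $F=(\sigma_1,\dots,\sigma_k)\in\{1,\dots,n\}^k$ and using that the sum of a product over a product set is the product of sums,
\begin{align}
\sum_{F\in\mathscr F_k}\int_{S(k;F;{\bs n})}d\mu_\beta\prod_{\gamma\in\varGamma}e^{\beta w_\gamma f(\sigma_\gamma)}
=D_\beta({\bs n})\sum_{(\sigma_1,\dots,\sigma_k)}\prod_{j=1}^k e^{\beta n_j f(\sigma_j)}
=D_\beta({\bs n})\prod_{j=1}^k\Big(\sum_{\sigma=1}^n e^{\beta n_j f(\sigma)}\Big).
\end{align}
Finally I would check that $\sum_{\sigma=1}^n e^{\beta m f(\sigma)}=G_\beta(m;{\bs b})$: the $\sigma=1$ term gives $e^{\beta m b_1}$, the $\sigma=n$ term gives $e^{-\beta m b_{n-1}}$, and the terms $2\le\sigma\le n-1$ give $\sum_{\sigma=2}^{n-1}e^{\beta m(-b_{\sigma-1}+b_\sigma)}$, matching the definition exactly. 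Hence the right side equals $D_\beta({\bs n})\prod_{j=1}^k G_\beta(n_j;{\bs b})=D_\beta({\bs n})\mathcal G_\beta({\bs n};{\bs b})$.

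The main obstacle I anticipate is making the flavor-relabeling argument fully rigorous at the level of the measure $\mu_\beta$ rather than just at the level of individual paths: one must verify that the relabeling map is a genuine measurable bijection of $L_\beta$ that preserves $\mu_\beta$, which requires carefully tracking how relabeling interacts with the equivalence classes $[\Omega^N_{\neq,\infty}]$, the sum over $\tau\in\mathfrak S_N({\bs X})$, and the factor $1/N!$ in \eqref{Measurenu}. In particular one needs that dynamically allowed permutations and the event $\{{\bs X}_\beta=\tau{\bs X}\}$ are compatible with relabeling—this follows because ${\bs T}$ is flavor-diagonal, so $\mathfrak S_N({\bs X})$ depends only on the spatial components of ${\bs X}$, but it should be spelled out. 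Everything else is bookkeeping: the factorization over loops is immediate from the previous theorem, and the identification of $\sum_\sigma e^{\beta m f(\sigma)}$ with $G_\beta(m;{\bs b})$ is a direct comparison of definitions.
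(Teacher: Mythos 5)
Your proposal is correct and follows essentially the same route as the paper: the flavor symmetry of the ${\bs b}={\bs 0}$ measure gives the $F$-independence of $\mu_{\beta}(S(k;F;{\bs n}))$, and the rest is the factorization of the constant integrand plus the interchange of sum and product over $\mathscr{F}_k$. The one difference is that the paper implements the symmetry by conjugating with the flavor-relabeling unitaries $U_{\bs \vepsilon}$, ${\bs \vepsilon}\in\mathfrak{S}_n^N$, and invoking $U_{\bs \vepsilon}H_{\Lambda}({\bs 0})U_{\bs \vepsilon}^{-1}=H_{\Lambda}({\bs 0})$ at the operator level, which disposes of the path-level measurability and $\mathfrak{S}_N({\bs X})$-compatibility issues you flag as the main obstacle.
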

\begin{proof}
Let $\mathfrak{S}_n$ be the permutation group on the set $\{1, 2, \dots, n\}$.
For any ${\bs\vepsilon}=(\vepsilon_j)_{j=1}^N\in \mathfrak{S}_n^N$, we define the unitary operator $U_{\bs \vepsilon}$ on $\F_N$ by
\be
(U_{\bs \vepsilon}\vphi)({\bs X})=\vphi({\bs X}_{\bs \vepsilon}),\ \ \vphi\in \F_N,
\ee
where 
${\bs X_{\bs \vepsilon}}=(X^{(1)}_{\bs \vepsilon}, \dots, X_{\bs \vepsilon}^{(N)})$ is defined by 
$X^{(j)}_{\bs \vepsilon}=(x^{(j)}, \vepsilon_j(\sigma^{(j)}))$.
Because 
\be
U_{\bs \vepsilon} H_{\Lambda}({\bs 0}) U_{\bs \vepsilon}^{-1} =H_{\Lambda}({\bs 0})
\ee
holds for any ${\bs \vepsilon}\in \mathfrak{S}_n^N$,
we readily confirm that $\mu_{\beta}( S(k; F; {\bs n}))$  is independent of $F$ and constant on $\mathscr{F}_k$.

Since the function $\prod_{\gamma\in \varGamma}
\exp\{ \beta w_{\gamma} f(\sigma_{\gamma})\}$ is constant on 
$S(k; F; {\bs n})$, we see that 
\be
\int_{S(k; F; {\bs n})} d\mu_{\beta}  \prod_{\gamma\in \varGamma}
\exp\{ \beta w_{\gamma} f(\sigma_{\gamma})\}
=\mu_{\beta}(S(k; F; {\bs n}))\prod_{j=1}^k \exp\{ \beta n_j f(\sigma_j)\},
\ee
where we set $F=(\sigma_1, \dots, \sigma_k)$ and ${\bs n}=(n_1, \dots, n_k)$.
Therefore, by combining the first half of the statement with this fact, we obtain  the following:
\begin{align}
\sum_{F\in \mathscr{F}_k}\int_{S(k; F; {\bs n})} d\mu_{\beta} 
\prod_{\gamma\in \varGamma}
\exp\{ \beta w_{\gamma} f(\sigma_{\gamma})\}&=
\sum_{F\in \mathscr{F}_k}D_{\beta}({\bs n})
\prod_{j=1}^k
\exp\{ \beta n_j f(\sigma_j)\}\no
&=D_{\beta}({\bs n}) \prod_{j=1}^k \sum_{\sigma=1}^n\exp\{ \beta n_j f(\sigma)\}\no
&=D_{\beta}({\bs n}) \mathcal{G}_{\beta}({\bs n}; {\bs b}).
\end{align}
We are now done with the proof of Lemma \ref{Key}.
\end{proof}

\begin{proof}[Completion of the proof of Theorem \ref{Main1}]
Using the fact that $L_{\beta}$ can be divided as
\be
L_{\beta}
=\bigsqcup_{k=1}^N \bigsqcup_{F\in \mathscr{F}_k} S(k; F)=
\bigsqcup_{k=1}^N \bigsqcup_{F\in \mathscr{F}_k} \bigsqcup_{{\bs n} \in P_N(k)} S(k; F; {\bs n}), 
\ee
  the partition function can be expressed as follows:
\begin{align}
Z_{\Lambda}(\beta;  {\bs b}) =&\sum_{k=1}^N \sum_{{\bs n}\in P_N(k)}\sum_{F\in \mathscr{F}_k} \int_{S(k; F; {\bs n})}  d\mu_{\beta} 
\prod_{\gamma\in \varGamma}
\exp\{ \beta w_{\gamma} f(\sigma_{\gamma})\}\no
=&\sum_{k=1}^N \sum_{{\bs n} \in P_N(k)} D_{\beta}({\bs n}) \mathcal{G}_{\beta}({\bs n}; {\bs b})\no
=& \mbox{the RHS of \eqref{ExZ}}.
\end{align}
This completes the proof of Theorem \ref{Main1}.
\end{proof}

\subsection{Proof of Theorem \ref{Main2}}

First, let us observe that the expected value $\la h_{\sigma}\ra$ can be expressed as follows:
\begin{align}
\beta \la h_{\sigma} \ra_{\beta}&=\frac{\partial}{\partial b_{\sigma}} \log Z_{\Lambda}(\beta; \bs b)\no
&=Z_{\Lambda}(\beta; {\bs b})^{-1} \sum_{{\bs n} \in P_N}\sum_{j=1}^k D_{\beta}(\bs n)\mathcal{G}_{\beta}({\bs n}; {\bs b})\frac{\frac{\partial }{\partial b_{\sigma}} G_{\beta}(n_j;{\bs b})}{ G_{\beta}(n_j; {\bs b})}. \label{ExpVG}
\end{align}
Next, we aim to estimate the quantity  
$\frac{\partial}{\partial b_{\sigma}} G_{\beta}(m; {\bs b}) \Big/G_{\beta}(m; {\bs b})$
 from below.
For this purpose, we introduce the following definitions:
\begin{align}
C_{\sigma}(m) &=e^{\beta m B_{\sigma}}+e^{\beta m B_{\sigma+1}} ,\\
S_{\sigma}(m) &=e^{\beta m B_{\sigma}}-e^{\beta m B_{\sigma+1}} .
\end{align}
Then  $G_{\beta}(m; {\bs b})$ can be expressed as 
\be
G_{\beta}(m; {\bs b})=\sum_{\sigma=1}^{n-1} e^{\beta m B_{\sigma}}
=C_{\sigma}(m)+H_{\sigma}(m),
\ee
where
\be
H_{\sigma}=\sum_{\tau\neq \sigma, \sigma+1} e^{\beta m B_{\tau}}.
\ee
From this representation, the equality
\be
\frac{\partial}{\partial b_{\sigma}} G_{\beta}(m; {\bs b})=\beta m S_{\sigma}(m)
\ee
 follows immediately. Hence, we obtain
\be
\frac{\partial}{\partial b_{\sigma}} G_{\beta}(m; {\bs b}) \Big/G_{\beta}(m; {\bs b})=\beta m\frac{S_{\sigma}/C_{\sigma}}{1+H_{\sigma}/C_{\sigma}}.
\ee
If we define the function $f$ by 
$f(x)=\frac{1-e^{-x}}{1+e^{-x}}$, then since $f$ is  monotonically increasing, we have the following inequality:
\be
\frac{S_{\sigma}(m)}{C_{\sigma}(m)} =f(\beta m (B_{\sigma}-B_{\sigma+1})) \ge f(\beta(B_{\sigma}-B_{\sigma+1}))=f_{\beta, \sigma}(\bs b), 
\ee
where we use the fact that $B_{\sigma}>B_{\sigma+1}$ and $f_{\beta, \sigma}({\bs b})$ is  defined by \eqref{Deff}.
A quick examination, on the other hand, also reveals the following inequality:
\begin{align}
\frac{H_{\sigma}(m)}{C_{\sigma}(m)}&=\frac{\sum_{\tau\neq \sigma, \sigma+1} e^{\beta m (B_{\tau}-B_{\sigma})}}{1+e^{-\beta m (B_{\sigma}-B_{\sigma+1})}}\no
&\le \sum_{\tau\neq \sigma, \sigma+1} e^{\beta m (B_{\tau}-B_{\sigma})}\le g_{\beta, \sigma}(\bs b),
\end{align}
where $g_{\beta, \sigma}(\bs b)$ is given by \eqref{Defg}. Here, we used the assumption that $B_{\sigma}>B_{\tau}\ (\tau\neq \sigma)$  in deriving the second inequality.
Putting the above inequalities together, we get
\begin{align}
\frac{\partial}{\partial b_{\sigma}} G_{\beta}(m; {\bs b}) \Big/G_{\beta}(m; {\bs b})\ge \beta m \frac{f_{\beta, \sigma}(\bs b)} {1+g_{\beta, \sigma}(\bs b)},
\end{align}
which implies that 
\begin{align}
\mbox{the RHS of \eqref{ExpVG} }\ge \beta \frac{ f_{\beta, \sigma}(\bs b)}{1+g_{\beta, \sigma}(\bs b)} N,
\end{align}
where we use the fact $\sum_{j=1}^k n_j=N$. 
We are now done with the proof of Theorem \ref{Main2}.
\qed

\end{document}